\documentclass[sigconf, nonacm]{acmart}

\usepackage{tikz}
\usepackage{amsmath}

\usepackage{filecontents}

\usepackage[dvipsnames]{xcolor}
\usepackage{xspace}
\usepackage{calligra}
\usepackage{graphicx}
\usepackage{amssymb}
\usepackage{tcolorbox}
\usepackage{multicol}
\usepackage{algorithmic}
\usepackage{setspace}
\usepackage{enumitem}
\usepackage{amsthm}
\usepackage{array}
\usepackage{multirow}
\usepackage{bm}
\usepackage{here}
\usepackage{xcolor}
\usepackage{booktabs}
\usepackage[table]{xcolor}
\usepackage{array}
\usepackage{diagbox}
\usepackage{amsmath}
\usepackage{graphicx}
\usepackage{subfigure}
\usepackage{booktabs}
\usepackage{balance}
\usepackage{multicol}
\usepackage{ulem}

\newcolumntype{B}{!{\vrule width 0.7pt}}

\newtheoremstyle{italicdefinition}
  {\topsep}
  {\topsep}
  {\itshape}
  {}
  {\bfseries}
  {.}
  { }
  {}

\theoremstyle{italicdefinition}
\newtheorem{definition}{\textsc{Definition}}
\theoremstyle{theorem}
\newtheorem{theorem}{\textsc{Theorem}}
\usepackage{adjustbox}
\usepackage{tcolorbox}
\usepackage{float}
\usepackage{caption}
\tcbuselibrary{skins, breakable, theorems}
\usepackage{hhline}
\usepackage{hyperref}
\usepackage{fancybox}

\newcommand{\fname}{\texttt{Bifrost}\xspace}
\newcommand{\baseline}{\texttt{iPrivJoin}\xspace}
\newcommand{\suda}{\texttt{Suda}\xspace}
\newcommand{\CPSI}{\texttt{CPSI}\xspace}

\newcommand{\MPSPDZ}{\texttt{MP-SPDZ}\xspace}

\newcommand{\share}[1]{\langle {#1} \rangle}

\newcommand{\IDb}{\textit{ID}^{b}}
\newcommand{\IDbb}{\pi^b_2(\textit{ID}^{b})}

\newcommand{\IDa}{\textit{ID}^{a}}
\newcommand{\IDaa}{\pi^a_1(\textit{ID}^{a})}
\newcommand{\IDaab}{\pi_1(\textit{ID}^{a})}

\newcommand{\MII}{\mathsf{MIPairs}}
\newcommand{\SMIG}{SMIG\xspace}

\newtcbtheorem[auto counter]{protocol}{\textbf{Protocol}}
  {enhanced, float,
   colback = white,
   attach boxed title to top left = {yshift = -2.8mm, xshift = 3mm},
   left = 1mm,
   right = 1mm,
   top = 2mm,
   bottom = 0.5mm,
   fonttitle = \sffamily\bfseries,
   width = \linewidth}{pro}

\usepackage{framed}
\definecolor{shadecolor}{RGB}{240,240,240}

\definecolor{GreenB}{RGB}{69,133,72}

\newcommand{\rf}[1]{\textcolor{black}{#1}}

\newcommand{\rall}[1]{\textcolor{black}{#1}}
\newcommand{\rone}[1]{\textcolor{black}{#1}}
\newcommand{\rtwo}[1]{\textcolor{black}{#1}}
\newcommand{\rthree}[1]{\textcolor{black}{#1}}

\newcommand\vldbdoi{XX.XX/XXX.XX}
\newcommand\vldbpages{XXX-XXX}
\newcommand\vldbvolume{19}
\newcommand\vldbissue{6}
\newcommand\vldbyear{2026}
\newcommand\vldbauthors{\authors}
\newcommand\vldbtitle{\shorttitle}
\newcommand\vldbpagestyle{empty}

\begin{document}

\title{\fname: A Much Simpler Secure Two-Party Data Join Protocol for Secure Data Analytics}

\author{Shuyu Chen}
\affiliation{
  \institution{Fudan University}
}
\email{23110240005@m.fudan.edu.cn}

\author{Mingxun Zhou}
\affiliation{
  \institution{The Hong Kong University of Science and Technology}
}
\email{mingxunz@ust.hk}

\author{Haoyu Niu}
\affiliation{
  \institution{Fudan University}
}
\email{23212010019@m.fudan.edu.cn}

\author{Guopeng Lin}
\affiliation{
  \institution{Fudan University}
}
\email{17302010022@fudan.edu.cn}

\author{Weili Han}
\affiliation{
  \institution{Fudan University}
}
\email{wlhan@fudan.edu.cn}

\begin{abstract}
Secure data join enables two parties with vertically distributed data to securely compute the joined table, allowing the parties to perform downstream Secure multi-party computation-based Data Analytics (SDA), such as analyzing statistical information or training machine learning models, based on the joined table.
While Circuit-based Private Set Intersection (\CPSI) can be used for secure data join, it inherently introduces redundant dummy rows in the joined table on top of the actual matched rows, which results in high overhead in the downstream SDA tasks.
\baseline addresses this issue but introduces significant communication overhead in the redundancy removal process, as it relies on the cryptographic primitive Oblivious Programmable Pseudorandom Function (OPPRF) and multiple rounds of oblivious shuffles.

In this paper, we propose a much simpler secure data join protocol, \fname, which outputs (the secret shares of) a redundancy-free joined table.
The highlight of \fname lies in its simplicity: it builds upon two conceptually simple building blocks, an ECDH-PSI protocol and a two-party oblivious shuffle protocol. The lightweight protocol design allows \fname to avoid the need for OPPRF. We also proposed a simple optimization named \textit{dual mapping} that reduces the rounds of oblivious shuffle needed from two to one.
Experiments on various datasets up to 100 GB show that \fname achieves $2.54 \sim 22.32\times$ speedup and reduces the communication by $84.15\% \sim 88.97\%$ compared to the state-of-the-art redundancy-free secure data join protocol \baseline. In addition, the communication size of \fname is nearly equal to the size of the input data.
In the experiments involving the entire two-step SDA pipeline (secure join and secure analytics), the redundancy-free property of \fname not only avoids the catastrophic error rate blowup in the downstream tasks caused by the dummy padded outputs in the joined table (as introduced in \texttt{CPSI}), but also shows up to $2.80\times$ speed-up in the secure analytics process with up to $73.15\%$ communication reduction.

\end{abstract}

\maketitle

\pagestyle{\vldbpagestyle}
\begingroup\small\noindent\raggedright\textbf{PVLDB Reference Format:}\\
\vldbauthors. \vldbtitle. PVLDB, \vldbvolume(\vldbissue): \vldbpages, \vldbyear.\\
\href{https://doi.org/\vldbdoi}{doi:\vldbdoi}
\endgroup
\begingroup
\renewcommand\thefootnote{}\footnote{\noindent
This work is licensed under the Creative Commons BY-NC-ND 4.0 International License. Visit \url{https://creativecommons.org/licenses/by-nc-nd/4.0/} to view a copy of this license. For any use beyond those covered by this license, obtain permission by emailing \href{mailto:info@vldb.org}{info@vldb.org}. Copyright is held by the owner/author(s). Publication rights licensed to the VLDB Endowment. \\
\raggedright Proceedings of the VLDB Endowment, Vol. \vldbvolume, No. \vldbissue\
ISSN 2150-8097. \\
\href{https://doi.org/\vldbdoi}{doi:\vldbdoi} \\
}\addtocounter{footnote}{-1}\endgroup

\ifdefempty{\vldbavailabilityurl}{}{
\vspace{.3cm}
\begingroup\small\noindent\raggedright\textbf{PVLDB Artifact Availability:}\\
The source code, data, and/or other artifacts have been made available at:
https://github.com/stellasuc/secdjoin.git
\endgroup
}

\section{Introduction}
\label{sec:intro}

Secure Multi-Party Computation (SMPC) enables multiple parties to jointly compute some functions over their data while keeping their data private.
By leveraging SMPC, parties can perform various Secure Data Analytics (SDA) tasks, including performing statistical analysis on their distributed data or even training a machine learning model, which helps address privacy-related data compliance requirements, such as GDPR~\cite{gdpr} and CCPA~\cite{bonta2022ccpa}.

\textit{Vertically distributed data} is a common setting in real-world SDA applications, appearing in many privacy-critical scenarios, including finance~\cite{secboost21,slg22fl}, e-commerce~\cite{wei2022vertical,chen2020vafl}, and healthcare~\cite{azar2013decision,wu2020privacy}.
That is, the parties' data tables overlap in the indexing IDs but have disjoint feature columns (see Figure~\ref{fig:vppml}). For example, consider a hospital that is cooperating with a genomic research company on training a cancer risk assessment model based on genetic data.
The hospital holds the health record table of its patients, and the genomic research company holds the gene description table of its clients.
Both tables are indexed by the individuals' unique IDs.
They will need to first \textit{align} the data by the IDs and
build a training dataset that contains only the records of those individuals who appear in both original tables.
In the traditional setting with no privacy constraint, the parties can run a classical distributed \textit{join} operation~\cite{14joindistri,rodiger2016flow,17worstcaseparl} on the tables and obtain a joined table that cross-matches the records in the original tables based on IDs and concatenates the feature columns.
They can then proceed to train their models based on the data samples in this joined table.

In our targeted privacy-constrained setting, a \textit{secure join} protocol is required instead of a traditional distributed join algorithm, given that these algorithms inherently leak sensitive information about one party's data table to the other party.
The secure join protocol ensures that the computation process of the join operation reveals only the absolutely necessary information to both parties (e.g., the size of the final joined table), and outputs the joined table in secret-shared form to both parties, so neither party learns the full, privacy-sensitive joined table.

Existing secure data join schemes either introduce significant computation and/or communication costs, or pass down nonideal overhead to the downstream SDA tasks.
On the one hand, Circuit-based Private Set Intersection (\CPSI)~\cite{rindal2021vole,rindal2022blazing} can be used for secure data join, but the protocols will inherently introduce many redundant dummy rows in the joined table (padded to the maximum length) other than the actual matched rows, which could be a huge waste in many practical scenarios.
Liu \textit{et al.}~\cite{liu2023iprivjoin} showed that the redundant rows generated by \CPSI may cause both downstream SDA tasks' time and communication overhead to increase by up to $3\times$,
and proposed another state-of-the-art (SOTA) secure data join method named
\baseline.
However, to remove the redundant rows, \baseline introduces excessive communication overhead in the secure data join process, which stems from its reliance on Oblivious Programmable Pseudorandom Function (OPPRF)~\cite{rindal2021vole} for data encoding and two rounds of oblivious shuffle.
Apart from that, \baseline also requires 11 rounds of communication between the two parties, becoming another disadvantage for its practicality.
Our experiments show that for the table size of 100 GB under WAN setting, $\baseline$ requires 53.87 hours in time cost and 773.95 GB in communication cost.

\begin{figure}[t]
    \centering
    \includegraphics[width=0.47\textwidth]{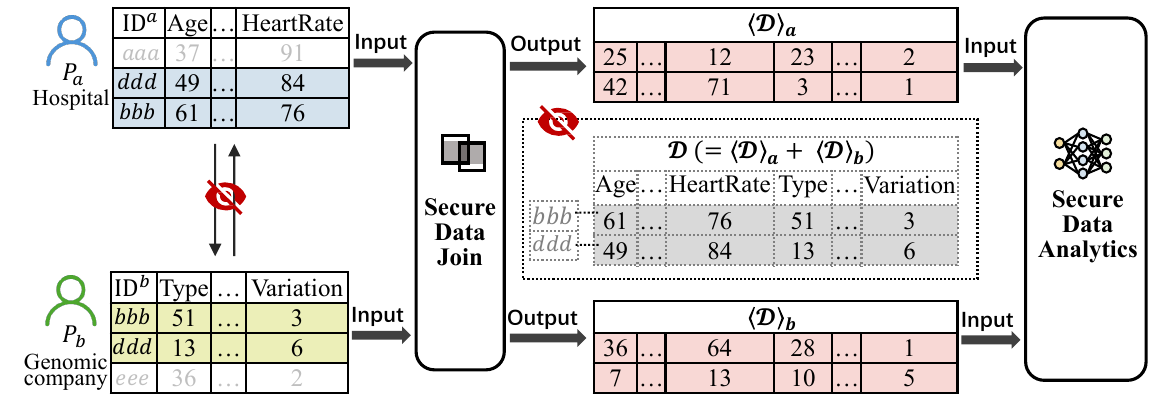}
    \caption{\rtwo{An example of a vertically distributed data setting. \(P_a\) and \(P_b\) each hold a table with three records. Both tables contain the intersection IDs ``$bbb $'' and ``$ddd$''.}}
    \label{fig:vppml}
\end{figure}

As a versatile building block, a secure join protocol can be applied to secure data mining~\cite{agrawal2000privacy,lin2025kona}, SQL operators~\cite{bater2016smcql,tong2022hu}, and data visualizations~\cite{21ppdvisual,21pubvisual} applications, in addition to the aforementioned SDA tasks.
Given the importance of an efficient secure join protocol in these applications, we ask the following question:
\begin{itemize}[leftmargin=6mm]
\item[]
\textit{How to construct a redundancy-free secure data join protocol with high computation efficiency and low communication cost?
}
\end{itemize}

\subsection{Our Contributions}
In this paper, we propose \fname, a conceptually simple secure data join protocol with better performance compared to the baselines.
We summarize our main contributions as follows:
\begin{itemize}[leftmargin=3mm]
    \item We propose a new secure two-party data join protocol named \fname that outputs (the secret shares of) the redundancy-free joined table. The highlight of \fname lies in its simplicity: \fname builds upon two conceptually simple building blocks, an Elliptic-Curve-Diffie-Hellman-based Private Set Intersection (ECDH-PSI) protocol~\cite{meadows1986more,huberman1999enhancing} and a two-party oblivious shuffle protocol~\cite{liu2023iprivjoin}. The lightweight protocol design allows \fname to avoid many performance bottlenecks in \baseline, including the need for Cuckoo hashing and OPPRF. We also propose a simple optimization named \textit{dual mapping} that reduces the rounds of oblivious shuffle (as in \baseline) needed from two to one.

    \item We provide a theoretical analysis of \fname's asymptotic performance and show that it outperforms the baselines in nearly all important metrics, including computation, communication, and round complexities. We also provide the rigorous security proof for \fname to meet the privacy requirement.

    \item We extensively evaluate the effectiveness of \fname in the pipeline of SDA tasks, including secure statistical analysis and secure training.
    First, we show that our redundancy-free design of \fname allows the downstream SDA tasks to achieve high-accuracy outputs, avoiding the catastrophic error rate blowup caused by \texttt{CPSI}-based solution~\cite{rindal2021vole,rindal2022blazing}.
    Moreover, the design helps the downstream SDA tasks achieve up to $2.80\times$ improvement in running time with up to $73.09\%$ less communication compared to pipelines using \texttt{CPSI}.

    \item We empirically evaluate \fname on various datasets up to $100$ GB and compare its performance against the baselines.
    We show that, on real-world datasets, \fname achieves up to $15.2\times$ and up to $10.36\times$ speedups compared to \baseline~\cite{liu2023iprivjoin} and \CPSI~\cite{rindal2022blazing}, respectively.
    In addition, compared to the redundancy-free secure data join baseline \baseline, \fname achieves $2.54 \sim 22.32\times$ faster running time and reduces communication size by $84.15\% \sim 88.97\%$.
    Moreover, the advantages of \fname become more pronounced as the feature dimension increases. For instance, when the feature dimension varies from 100 to 6400, the running time improvement increases from  $9.58\times$ to $21.46\times$. Notably, the communication size of \fname is nearly equal to the size of the input data.

\end{itemize}

\subsection{Design Goal and Technical Overview}
\label{sec:intro_overview}

\vspace{0.5em}
\noindent \textbf{\rtwo{Design Goal.}} \rtwo{
We aim to design a secure two-party data join protocol, \fname.
To clarify this goal, we first define the two-party data join operation~\cite{bater2016smcql,poddar2021senate}. Consider two tables, $\mathsf{Ta}$ and $\mathsf{Tb}$, held by two parties $P_a$ and $P_b$, respectively. Each table consists of identifiers and corresponding feature columns.
A two-party data join operation inputs $\mathsf{Ta}$ and $\mathsf{Tb}$ and outputs a joined table $\mathcal{D}$ by identifying the matched identifiers across both tables and concatenating their respective feature columns.
This join operation can be expressed in SQL as follows:
}

\begin{center}
\begin{small}
\begin{verbatim}
SELECT Ta.* EXCEPT (identifiers), Tb.* EXCEPT (identifiers)
FROM Ta JOIN Tb 
ON Ta.identifiers = Tb.identifiers
\end{verbatim}
\end{small}
\end{center}

\rtwo{
The \textit{secure} two-party data join protocol, \fname, ensures that this join operation is performed securely. Here, ``secure'' means that \fname reveals only the size of the joined table during the computation process of the join operation, and outputs (secret) shares of the joined table to both parties (one share per party).
Note that each joined table share is uniformly random, so it reveals nothing about the joined table value or the matched identifiers. Furthermore, these shares enable the two parties to directly perform downstream secure data analytics on the joined table shares using secure multi-party computation techniques.
}

\begin{figure*}[htbp]
    \centering
    \includegraphics[width=0.98\textwidth]{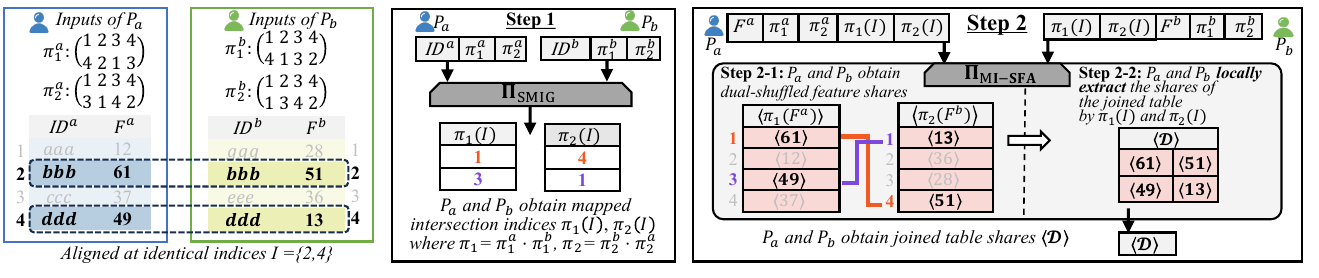}
    \caption{
    Simplified workflow of \fname with \textit{Dual Mapping} optimization. The intersection IDs and their features are highlighted in bold.
    In the first step, both parties obtain mapped intersection indices $\pi_1(I)$ and $\pi_2(I)$.
    In the second step, both parties first obtain dual-shuffled feature shares $\share{\pi_1(F^a)}$ and $\share{\pi_2(F^b)}$ via one round of $\Pi_{\rm O\text{-}Shuffle}$, then both parties locally extract the shares of the joined table $\mathcal{D}$ from $\share{\pi_1(F^a)}$ and $\share{\pi_2(F^b)}$ according to  $\pi_1(I)$ and $\pi_2(I)$.
    }
    \label{fig:overview}
\end{figure*}

\vspace{0.5em}
\noindent \textbf{Simplified Setting Description.} To show how \fname achieves this design goal, we now provide a technical overview in a simplified setting.
There are two tables consisting of identifiers and the corresponding feature columns, $[(\IDa_i, F^a_i)$ $]_{i\in[n]}$
and $[(\IDb_i, F^b_i)]_{i\in[n]}$, held by party $P_a$ and $P_b$, respectively.
Both parties want to run some downstream task on the \textit{joined table}, that is, $\{(F^a_i, F^b_i)\}_{i\in I}$ where the list $I$ includes all matched indices: $I=[i \mid \IDa_i = \IDb_i]$. We focus on designing an interactive protocol between the parties that outputs secret shares of the joined table to both parties (one share per party) without exposing the intersection index set $I$ to either party.

For simplicity, we make the assumption here that the tables are ``aligned'', i.e., if an identifier appears in both tables, the corresponding rows must share the same relative location in the lists.
\rtwo{This ``aligned'' simplicity allows us to focus on the challenge of how to identify $I$ and extract the corresponding feature pairs $\{(F^a_i, F^b_i)\}_{i\in I}$ in secret-shared form without revealing $I$ itself.}
The unaligned case, which is typically the standard definition of the secure join problem, can be accommodated through protocol-specific modifications.

\vspace{0.5em}
\noindent \textbf{Workflow of~\fname.}
The two major steps are:
\begin{enumerate}[leftmargin=*, itemsep=0pt, topsep=0pt, partopsep=0pt]
    \item \textit{Secure mapped intersection generation (SMIG).} In this step, the two parties obliviously find all the matched indices $I$ where $\IDa_i=\IDb_i$ for $i\in I$. For privacy, neither party can learn the indices $I$ directly.
    Instead, our proposed protocol outputs a \textit{mapped} version of the matched indices $I$, denoted as $\pi(I)$, where $\pi:[n]\mapsto[n]$ is a mapping (shuffling) unknown to both parties. We set $\pi$ as the composition of two shuffles $\pi^a,\pi^b$, each only known to party $P_a$ and $P_b$ respectively. This ensures that both parties learn only the size of the intersection (i.e., row count of the joined table) and nothing else.
    To securely instantiate this step, we adapt an Elliptic-Curve-Diffie-Hellman-based Private Set Intersection (ECDH-PSI) protocol~\cite{meadows1986more,huberman1999enhancing} to our setting with several technical changes.

    \item \textit{Secure feature alignment.} The second step is straightforward: the parties jointly run an oblivious shuffling protocol~\cite{liu2023iprivjoin} to obtain the shares of $\pi(F^a)$ and $\pi(F^b)$\footnote{We abuse the notation here that $\pi(F^a)_{\pi(i)}=F^a_{i}$ for all $i\in[n]$.}. Then, both parties can locally extract the shares of the joined table. That is, they collect the shares of $\left\{\left(\pi(F^a)_{i}, \pi(F^b)_{i}\right)\right\}_{i\in \pi(I)}$, which is exactly $\left\{\left(F^a_{i}, F^b_i\right)\right\}_{i\in I}$.
\end{enumerate}

\noindent \textit{Example}: the two parties hold $[(aaa, 12), (bbb, 61), (ccc, 37), (ddd, $ $49)]$ and $[(ggg, 28), (bbb, 51), (eee, 36), (ddd,13)]$, respectively.
The matched indices are $I=[2,4]$, corresponding to the intersection identifers ``$bbb$'' and ``$ddd$''.
Given a shuffle \( \pi: (1 \to 3,\ 2 \to 1,\ 3 \to 4, 4 \to 2) \) unknown to both parties, the secure mapped intersection generation protocol outputs the mapped intersection indices $\pi(I)=\pi([2, 4])=[1, 2]$.
Then, both parties run the oblivious shuffle protocol~\cite{liu2023iprivjoin} to shuffle their local feature columns according to $\pi$ and obtain the shares of the two shuffled feature columns: $[61, 49, 12, 37]$ and $[51, 13, 28, 36]$. Now, given the mapped intersection indices $[1,2]$, they simply collect the first and the second entries in the local shares, which are the shares of the concatenated intersection features: $[(61, 51), (49, 13)]$.

\vspace{0.5em}
\noindent \textbf{Optimization: Dual Mapping.}
A performance bottleneck of the aforementioned workflow is the secure shuffle protocol in the second step. The two-party oblivious shuffle protocol $\Pi_{\rm O\text{-}Shuffle}$~\cite{liu2023iprivjoin} is designed for the following functionality: given the input of a data column (from one party) and a shuffle (from another party), the protocol outputs the shares of the shuffled data column to both parties.
Recall that the shuffle $\pi$ is a composition of $\pi^a$ and $\pi^b$ (first applying $\pi^a$, then applying $\pi^b$). When the parties run the oblivious shuffle protocol to shuffle the feature columns of $F^a$, $P_a$ can first locally shuffle the feature columns according to $\pi^a$. Then, they can directly call $\Pi_{\rm O\text{-}Shuffle}$ given the self-shuffled features $\pi^a(F^a)$ from $P_a$ and the shuffle $\pi^b$ from $P_b$ to obtain the share of $\pi(F^a)$.
However, when they need to shuffle the column of $F^b$, $P_b$ cannot shuffle the column locally according to $\pi^b$ first, because the shuffling operation is not commutative (i.e., $\pi=\pi^a\circ\pi^b$ may not be the same as $\pi^b\circ\pi^a$).
A naive solution could be running the $\Pi_{\rm O\text{-}Shuffle}$ twice, resulting in high communication overhead.

We propose a simple yet effective optimization called \textit{Dual Mapping} to address this issue. That is, we now modify the secure mapped intersection generation (SMIG) protocol to output \textit{two} mapped versions of the intersection indices $I$ with two different mappings, $\pi_1$ and $ \pi_2$. Importantly, we let $\pi_1=\pi^a_1\circ\pi^b_1$ and $\pi_2=\pi^b_2\circ\pi^a_2$
where $\pi^a_1, \pi^a_2$ are only known to $P_a$, and $\pi^b_1, \pi^b_2$ are only known to $P_b$.
Then, in the oblivious shuffle protocol execution later, $P_a$'s feature columns will be shuffled by $\pi_1$, while $P_b$'s feature columns will be shuffled by $\pi_2$.
A simplified workflow of \fname with the \textit{Dual Mapping} optimization is shown in Figure~\ref{fig:overview}.
The advantage is that now both first layers of the shuffles can be done by the corresponding party locally, and we only need to execute the oblivious shuffle protocol once on both sides.
Effectively, this optimization saves $\mathcal{O}(m_b)$ online communication overhead compared to the naive solution, where $m_b$ denotes the feature dimension for $P_b$.

\vspace{0.5em}
\noindent \textbf{Supporting the General Scenario.}
In practice, the tables from both parties are usually not aligned, i.e., the rows with the same identifier may not be in the same relative location in the two tables.
The prior standard solutions to this issue, including \CPSI~\cite{rindal2022blazing} and \baseline~\cite{liu2023iprivjoin}, use a combination of Cuckoo hashing~\cite{pinkas2019efficient} and simple hashing to place the data elements in the same relative location in the hash tables.
This further complicates the protocol design and affects the overall performance.

Instead, our protocol is naturally capable of handling the general scenario without requiring any hashing technique involved, saving another $\mathcal{O}(hm_b\kappa)$ in computation cost and $\mathcal{O}(hm_b\kappa)$ in communication cost, where $\kappa$ is the computational security parameter and $h$ is the number of hashing functions.
More specifically, our protocol simply outputs the mapped intersection index pairs $\mathsf{MIPairs} = \{(\pi_1(i),\pi_2(j)) \mid \IDa_i = \IDb_j\}$ in the first step, and proceeds as before.
We demonstrate that this upgrade can be implemented with a minor modification to our aforementioned ECDH-PSI-based~\cite{meadows1986more} secure mapped intersection generation protocol, incurring no additional overhead.

\vspace{0.5em}
\noindent \textbf{Comparison with the Prior State-of-the-art Scheme.}
Compared to the prior best scheme \baseline, our proposed \fname achieves significantly better efficiency in terms of communication and computation overhead by fundamentally restructuring the workflow.
Specifically, \baseline operates in three steps:
(1) private data encoding: the parties use Cuckoo hashing or simple hashing tables for data allocation and OPPRF~\cite{kolesnikov2017practical} for data encoding, obtaining shares of $(B,F^a,F^b)$ with $B_i=0$ for intersection rows and a random value otherwise; (2) oblivious shuffle: the two parties employ two rounds of an oblivious shuffle protocol on encoded data to obtain shares of $\pi(B,F^a,F^b)$. Here, $\pi$ is the composite of two shuffles $\pi^a$, $\pi^b$, each only known to $P_a$ and $P_b$, respectively. (3) private data trimming: the two parties reconstruct $\pi(B)$ in plaintext and remove redundant rows from shares of $\pi(F^a,F^b)$ according to $\pi(B)$.
The bottlenecks in \baseline are twofold:
Firstly, in the first step,  all data from the two parties must be communicated, particularly through OPPRF, incurring both communication and computational complexities of $\mathcal{O}(hn(\lambda + \log n + m_b\kappa) + nm_a\ell)$. Here $\lambda$ and $\kappa$ are security parameters, $\ell$ is the element bit length, and $h$ is the number of hash functions.
Secondly, in the second step, two rounds of $\Pi_{\rm O\text{-}Shuffle}$ on encoded data are required due to the secret-shared nature of the encoded data, incurring both communication and computation complexities of $2n(m_a\ell +m_b\ell+\kappa)$.

In contrast, \fname builds upon two conceptually simple building blocks, an ECDH-PSI protocol~\cite{meadows1986more,huberman1999enhancing} and the oblivious shuffle protocol~\cite{liu2023iprivjoin}, eliminating Cuckoo hashing and OPPRF. Notably, ECDH-PSI overhead $\mathcal{O}(n\sigma)$ (where $\sigma$ is the ECC key bit-length), depends only on identifiers and not on feature dimensions.
We further propose a simple optimization named \textit{dual mapping} that reduces the need for two rounds of oblivious shuffle (as in \baseline) to one round. As a result, the total communication and computation complexity of \fname are both $\mathcal{O}(n\sigma + nm_a\ell + nm_b\ell)$,
reducing overhead by at least 66.7\% compared to \baseline.
As shown in Table~\ref{tab:commcmp}, \fname exhibits substantially lower online communication overhead than \baseline. Moreover, \fname reduces offline communication at least by half compared to \baseline. A more detailed communication comparison is provided in Appendix~\ref{app:comm_cmp}.

\begin{table}[htbp]
\belowrulesep=0pt
\aboverulesep=0pt
\centering
\caption{Comparison of \fname and \baseline\cite{liu2023iprivjoin} in online communication. \rf{Here, $n$ is the row count of the input data; $m_a$ and $m_b$ are the feature dimensions for $P_a$ and $P_b$, respectively; $m = m_a + m_b$; $\ell$ is the element bit length; $\sigma$ is the ECC key bit length; $h$ is the number of hash functions; $\lambda$ and $\kappa$ are the statistical security parameter and the computational security parameter, respectively.}}
\scalebox{0.88}{
\renewcommand{\arraystretch}{1.05}
\setlength{\tabcolsep}{3pt}
\begin{tabular}{c|c|c|c}
\toprule
\multicolumn{1}{c|}{Protocol} & \multicolumn{2}{c|}{Communication Size (bits)} & Round               \\ \toprule
\multirow{2}{*}{\small \baseline \cite{liu2023iprivjoin}}    & Step (1)                            & Step (2)+(3)      & \multirow{2}{*}{11} \\ \cline{2-3}
                              & $\mathcal{O}(hn(\lambda + \log n + m_b\kappa) + nm_a\ell)$    & $\mathcal{O}(nm\ell + n\kappa)$     &                     \\ \bottomrule
{\small \fname}                          & \multicolumn{2}{c|}{$\mathcal{O}(n\sigma + nm\ell)$}                    & 4
\\ \bottomrule
\end{tabular}
}
\label{tab:commcmp}
\end{table}

\section{Preliminaries}
\label{sec:preliminaries}

\par\vspace{0.5em}
\noindent \textbf{Notations.}
We use the notation $\{x_1,\dots,x_t\}$ to denote an unordered set \rf{and} the notation $[x_1,\dots,x_t]$ to denote an ordered list.

\par\vspace{0.5em}
\noindent \textbf{ECDH-PSI.}
\rone{Private Set Intersection (PSI) \rf{protocols} constructed using Elliptic Curve Cryptography (ECC)~\cite{koblitz1987elliptic,miller1985use} are commonly referred to as ECDH-PSI~\cite{meadows1986more,huberman1999enhancing}.
ECC is typically preferred in PSI because it offers the same security level with significantly smaller key sizes compared to other cryptographic algorithms with similar functionality, such as RSA.
Technically, ECC relies on the algebraic properties of elliptic curves over finite fields. }
An elliptic curve $E$ over $\mathbb{Z}_q$ is defined by the equation $\rtwo{y^2 = x^3 + \gamma_1 x + \gamma_2} \pmod{q},$ where $\rtwo{\gamma_1, \gamma_2} \in \mathbb{F}_q$ satisfy the non-singularity condition $\rtwo{4\gamma_1^3 + 27\gamma_2^2} \not\equiv 0 \pmod{q}$.
Scalar multiplication on this curve, denoted as $k\rtwo{Q}$, represents the repeated addition of a point \rtwo{$Q$} to itself $k$ times.
The security of ECC is based on the hardness of the elliptic curve discrete logarithm problem (ECDLP)~\cite{hankerson2004guide}: given two points \rtwo{$Q$} and $Q_2 = k\rtwo{Q}$, it is computationally infeasible to recover $k$.

In a typical ECDH-PSI protocol, two parties $P_a$ and $P_b$ hold data $X=[x_i]_{i \in [n]}$ and $Y=[y_i]_{i \in [n]}$, respectively. They first map their input data to points on an elliptic curve using a cryptographic hash function $H(\cdot)$. Then, they compute the intersection set as follows:
\begin{enumerate}[label=(\arabic*),leftmargin=*,itemsep=0pt,topsep=0pt]
    \item  $P_a$ and $P_b$ each generates a private scalar key, denoted $\alpha$ and $\beta$, respectively.
    \item $P_b$ computes \( \beta Y = [\beta H(y)]_{y \in Y} \)  and sends $\beta Y$ to $P_a$.
    \item $P_a$ shuffles its data to obtain $\pi(X)$ and computes \( \alpha \pi(X) = [ \alpha H(x)]_{x \in \pi(X)} \). $P_a$ receives $\beta Y$ and computes \( \alpha\beta Y = [ \alpha y]_{y \in \beta Y} \). $P_a$ sends $\alpha \pi(X)$ and $\alpha\beta Y$ to $P_b$.
    \item Upon receiving $\alpha \pi(X)$ and $\alpha\beta Y$, $P_b$ computes $\beta \alpha \pi(X) = [ \beta x$ $]_{x \in \alpha \pi(X)}$. Since scalar multiplication on elliptic curves is commutative (i.e., $\alpha\beta = \beta \alpha$), $P_b$ finally obtains the intersection set by evaluating $\beta\alpha \pi(X) \cap \alpha\beta Y$.
    \rtwo{$P_b$ can reveal their intersection elements to $P_a$ if necessary.}
\end{enumerate}

\par\vspace{0.5em}
\noindent \textbf{Additive Secret Sharing.}
Additive secret sharing is a key technology in SMPC~\cite{yao1986generate,evans2018pragmatic}. In the two-party setting of additive secret sharing, an \(\ell\)-bit value \(x\) is split into two secret-shared values (i.e., shares), denoted $\share{x}_a$ and $\share{x}_b$, where \( \share{x}_* \in \mathbb{Z}_{2^\ell} \) and each party \(P_*\) holds one share \(\share{x}_*\) for $* \in \{a,b\}$. The original value \(x\) can be revealed by summing all shares:
$ \share{x}_a + \share{x}_b \equiv x \pmod{2^\ell}$.

\par\vspace{0.5em}
\noindent \textbf{Oblivious Shuffle.}
The oblivious shuffle functionality $\mathcal{F}_{\textit{O-Shuffle}}$ inputs a random permutation $\pi$ from $P_b$ and inputs a matrix $X$ from $P_a$. It outputs a secret-shared and shuffled matrix $\share{\pi(X)}$ while ensuring that \rtwo{the permutation $\pi$ is unknown to $P_a$ and $X$ is unknown to $P_b$}.
In this paper, we employ the oblivious shuffle protocol $\Pi_{\rm O\text{-}Shuffle}$ (Protocol~\ref{pro:oshuffle}).
In the online phase of Protocol~\ref{pro:oshuffle}, the communication size and round are $n\cdot m$ and 1, respectively, where $n\cdot m$ is the size of matrix $X$.
Additionally, for the offline phase of $\Pi_{\rm O\text{-}Shuffle}$, we employ \rf{the} protocol in Algorithm 2 in \cite{liu2023iprivjoin}.
\begin{small}
\begin{protocol}{$\Pi_{\rm{O\text{-}Shuffle}}$}{oshuffle}
\textbf{Parameters:} The size of matrix $n \times m$; The bit length of element $\ell$.
\\
\textbf{Inputs:} $P_a$ holds a matrix $X \in {\mathbb{Z}^{n \times m}_{2^\ell}}$.
\\
\textbf{Outputs:} Each party obtains $\share{\pi(X)}$, where $\pi$ is only known to $P_b$.
\\
\textbf{Offline:} $P_a$ generates a random matrix $R \in \mathbb{Z}^{n \times m}_{2^\ell}$.
$P_b$ samples a random permutation $\pi: [n] \mapsto [n]$. Then, two parties invoke an instance of $\mathcal{F}_{\textit{O-Shuffle}}$ to obtain $\share{\pi(R)}$.
\\
\textbf{Online:}
\begin{enumerate}[label=\arabic{enumi}.,leftmargin=1.2em,itemsep=0pt, topsep=0pt, partopsep=0pt, parsep=0pt]
\item $P_a$ computes $\hat{X} = X-R$ and sends $\hat{X}$ to $P_b$.
\item $P_a$ sets $\share{\pi(X)}_a = \share{\pi(R)}_a$.
\item $P_b$ computes $\share{\pi(X)}_b =  \pi(X)+\share{\pi(R)}_b$.
\end{enumerate}
\end{protocol}
\end{small}
\section{Problem Description}
\label{sec:overview}

We summarize notations we use in Table~\ref{tab:notations}.

\begin{table}[htbp]
\belowrulesep=0pt
\aboverulesep=0pt
\centering
\caption{Notations used in this paper.}
\scalebox{0.87}{
\begin{tabular}{c|p{7cm}}
\toprule
Symbol & Description
\\
\toprule
$P_a$,$P_b$  & Two parties involved in secure data join.
\\
$n$ & The row count of the input data.
\\
$m_a$,$m_b$ & The feature dimension of $P_a$ and $P_b$, respectively.
\\
$m$ & The total feature dimension ($m = m_a + m_b$).
\\
$c$ & \begin{tabular}[c]{@{}c@{}} \rtwo{The row count of the joined table.}  \end{tabular}
\\
$\IDa$, $\IDb$ & The identifers (IDs) of $P_a$ and $P_b$, respectively.
\\
$F^a$, $F^b$ & The features of $P_a$ and $P_b$, respectively.
\\
$\mathcal{D}$ & The redundancy-free joined table.
\\
\rtwo{$|\IDa \cap \IDb|$} & \rtwo{The size of the intersection identifiers.}
\\
$\mathbb{Z}_{2^\ell}$ & \begin{tabular}[c]{@{}c@{}} The ring size used in our paper.\end{tabular}
\\
\rtwo{$\mathbb{Z}_{2^\ell}^{c\times m}$} & \begin{tabular}[c]{@{}c@{}}
\rtwo{
All $c \times m$ matrices whose elements are from the ring $\mathbb{Z}_{2^{\ell}}$.
}
\end{tabular}
\\
$\share{x}_*$ & \begin{tabular}[c]{@{}c@{}} The share (i.e., secret-shared value) of $x$ for $P_*$.\end{tabular}
\\
$X_i$ & \begin{tabular}[c]{@{}c@{}} The $i$-th element of $X$. \end{tabular}
\\
$X_{i,j}$ & \begin{tabular}[c]{@{}c@{}} The $j$-th element of $i$-th row in $X$. \end{tabular}
\\
$\{x_1,\ldots,x_t\}$ & \begin{tabular}[c]{@{}c@{}} An unordered set. \end{tabular}
\\
$[x_1,\ldots,x_t]$ & \begin{tabular}[c]{@{}c@{}} An \rf{ordered} list. \end{tabular}
\\
$[x]$ & \begin{tabular}[c]{@{}c@{}} An \rf{ordered} list $[1,2,...,x]$. \end{tabular}
\\
\bottomrule
\end{tabular}
}
\label{tab:notations}
\end{table}

\noindent \textbf{Problem Definition.}
\rtwo{To formally define the problem, we present the functionality $\mathcal{F}_{\textit{2PC-DJoin}}$ of the secure two-party data join. Note that $\mathcal{F}_{\textit{2PC-DJoin}}$ specifies how to realize a secure data join with a third trusted party (TTP), while our proposed \fname realizes it without such a TTP. As shown in Figure~\ref{funct:secdjoin}, $\mathcal{F}_{\textit{2PC-DJoin}}$ inputs \((\IDa , F^a)\) from $P_a$ and inputs \((\IDb , F^b)\) from $P_b$. Here, \(\textit{ID}^*\) is an ordered list of $n$ identifiers (IDs) and \(F^*\) is the corresponding features of size $n \times m_*$ for $* \in \{a,b\}$.
\(\mathcal{F}_{\textit{2PC-DJoin}}\) outputs shares \(\share{\mathcal{D}}_a\) and \(\share{\mathcal{D}}_b\) of the joined table $\mathcal{D}$ to $P_a$ and $P_b$, respectively. Specifically, the joined table $\mathcal{D}$ contains $c = |\IDa \cap \IDb|$ rows, where each row consists of $m_a$ features from $P_a$ and $m_b$ features from $P_b$ for an intersection ID.
The joined table shares satisfy $\share{\mathcal{D}}_a + \share{\mathcal{D}}_b = \mathcal{D}$ and are sampled uniformly at random.
Since each party only obtains a joined table share that is independent of $\mathcal{D}$, \rf{neither party learns} the intersection IDs nor the feature values in $\mathcal{D}$.
Consequently, after executing \(\mathcal{F}_{\textit{2PC-DJoin}}\), neither party learns any extra information beyond what is revealed by the size $c \times m$ of the joined table. }

\begin{small}
\begin{figure}[h]
    \centering
    \scalebox{0.98}{
    \fbox{
        \parbox{0.46\textwidth}{
        {
        \centering
        \textbf{\underline{Functionality $\mathcal{F}_{\rm \textit{2PC-DJoin}}$}}\par}

        \textbf{Parameters:}
        The input data row count $n$.
        The bit length of element $\ell$.
        $P_a$'s and $P_b$'s feature dimensions $m_a$ and $m_b$.
        $m = m_a + m_b$.

        \textbf{Input:} $P_a$ inputs its data $(\IDa, F^a)$. $P_b$ inputs its data $(\IDb, F^b)$.

        \textbf{Functionality:}
        \begin{enumerate}[label=\arabic{enumi}.,leftmargin=1.2em,itemsep=0pt, topsep=0pt, partopsep=0pt, parsep=0pt]
            \item Let intersection identifiers $\textit{ID}^{a\cap b} = \textit{ID}^{\,a} \cap \textit{ID}^{\,b}$. Define the joined table row count as $c = |\textit{ID}^{a\cap b}|$.
            \item Let joined table $\mathcal{D} = \{[F^a_{j_a,1},\ldots,F^a_{j_a,m_a},F^b_{j_b,1},\ldots, F^b_{j_b,m_b}]\}_{j \in [c]}$, where $j_a$ and $j_b$ are indices such that $\textit{ID}^a_{j_a} = \textit{ID}^b_{j_b} = \textit{ID}^{a \cap b}_j$.
            \item Sample two random tables $\share{\mathcal{D}}_a \in {\mathbb{Z}_{2^\ell}^{c\times m}}$ and $\share{\mathcal{D}}_b \in {\mathbb{Z}_{2^\ell}^{c\times m}}$ such that $\share{\mathcal{D}}_a + \share{\mathcal{D}}_b = \mathcal{D}$.
            \item Return $\share{\mathcal{D}}_a$ to $P_a$ and $\share{\mathcal{D}}_b$ to $P_b$.
        \end{enumerate}
        }
    }
    }
    \caption{Ideal functionality of Secure Two-Party Data Join.}
    \label{funct:secdjoin}
\end{figure}
\end{small}

\par\vspace{0.5em}
\noindent \textbf{Security Model.}
\rall{Following prior secure data join schemes~\cite{liu2023iprivjoin,songsuda}, in this paper, we consider semi-honest probabilistic polynomial time (PPT) adversaries}. A semi-honest PPT adversary $\mathcal{A}$ can corrupt one of the parties $P_a$ or $P_b$ at the beginning of the protocol and aims to learn extra information from the protocol execution while still correctly executing the protocol.
We follow the standard simulation-based definition of security for secure two-party computation~\cite{canetti2001universally,oded2009foundations}.
We give the formal security definition as follows:
\begin{definition}
[Semi-honest Model]
Let $\text{view}^\Pi_C(x,y)$ be the views (including input, random tape, and all received messages) of the corrupted party $C$ during the execution of the protocol $\Pi$, $x$ be the input of the corrupted party and $y$ be the input of the honest party. Let $out(x,y)$ be the protocol's output of all parties and $\mathcal{F}(x,y)$ be the functionality's output. $\Pi$ is said to securely compute a functionality $\mathcal{F}$ in the semi-honest model if for any Probabilistic Polynomial-Time (PPT) adversary $\mathcal{A}$ there exists a simulator $\text{Sim}_C$ such that for all inputs $x$ and $y$,
\[
\{\text{view}^\Pi_C(x,y), \text{out}(x,y)\} \approx_c \{\text{Sim}_C(x, \mathcal{F}_C(x,y)), \mathcal{F}(x,y)\}.
\]
where $ \approx_c$ represents computational indistinguishability.
\end{definition}

\section{Design}
\label{sec:design}

\fname is instantiated by Protocol~\ref{pro:2pcdjoin}, which realizes the secure two-party data join functionality $\mathcal{F}_{\textit{2PC-DJoin}}$.
For simplicity, we assume both parties hold input data with equal row counts ($n$).
\fname can be directly extended to support parties holding data with different row counts without extra overhead, as detailed in the Appendix~\ref{app:difrowcounts}.

\begin{small}
\begin{protocol}{$\Pi_{\rm 2PC\text{-}DJoin}$}{2pcdjoin}
\textbf{Parameters:} The same as $\mathcal{F}_{2PC\text{-}DJoin}$.

\textbf{Input:}  $P_a$ inputs its data $(\IDa, F^a)$. $P_b$ inputs its data $(\IDb, F^b)$.

\textbf{Output:} $P_a$ and $P_b$ obtain the joined table shares $\share{\mathcal{D}}_a$ and $\share{\mathcal{D}}_b$, respectively.

\begin{enumerate}[label=\arabic{enumi}.,leftmargin=1.2em]

\item \textbf{[Secure Mapped Intersection Generation]}. $P_a$ and $P_b$ execute the secure mapped intersection generation protocol $\Pi_{\rm \SMIG}$ (Protocol~\ref{pro:smig}), where $P_a$ inputs its identifiers $\IDa$, and $P_b$ inputs its identifiers $\IDb$.
 After the execution, both parties obtain the mapped intersection index pairs $\MII$.

\item \textbf{[Secure Feature Alignment]}. $P_a$ and $P_b$ execute the mapped intersection-based secure feature alignment protocol $\Pi_{\rm MI\text{-}SFA}$ (Protocol~\ref{pro:sfa}), where $P_a$ inputs its feature columns $F^a$ and mapped intersection pairs $\MII$, and $P_b$ inputs its feature columns $F^b$ and mapped intersection pairs $\MII$. After the execution, $P_a$ and $P_b$ obtain the joined table shares $\share{\mathcal{D}}_a$ and $\share{\mathcal{D}}_b$, respectively.

\end{enumerate}
\end{protocol}
\end{small}

\par\vspace{0.25em}
\noindent \textbf{Setup, Offline and Online Phase.}
Similar to prior works~\cite{hao2024unbalanced,Kerschbaum2023}, operations in \fname are performed in three distinct phases:
(1) offline phase: the two parties perform input-independent precomputations to generate correlated randomness;
(2) setup phase: the two parties preprocess their input data, respectively;
(3) online phase: the two parties use randomness and the preprocessed data to securely compute the output.

\par\vspace{0.25em}
\noindent \textit{\rtwo{Example:}} \rtwo{Throughout this section we use a simple example where $P_a$ holds $(\IDa,F^a) = [(aaa,12), (bbb,61), (ccc,37), (ddd,49)]$, and $P_b$ holds $(\IDb,F^b) = [(ddd,13), (bbb,51), (ggg,28),(eee,36)]$. In the offline phase, $P_a$ samples two permutations $\pi^a_1:(1 \to 4, 2 \to 2, 3 \to 1, 4 \to 3)$ and $\pi^a_2:(1 \to 3, 2 \to 1, 3 \to 4, 4 \to 2)$, and $P_b$ samples two permutations $\pi^b_1:(1 \to 4, 2 \to 1, 3 \to 3, 4 \to 2)$ and $\pi^b_2:(1 \to 1, 2 \to 3, 3 \to 4, 4 \to 2)$.
}

\subsection{Secure Mapped Intersection Generation}
\label{subsec:smig}

In the secure mapped intersection generation step, the two parties execute our proposed secure mapped intersection generation (\SMIG) protocol $\Pi_{\rm \SMIG}$ (Protocol~\ref{pro:smig}). Below, we introduce the definition and construction of this protocol.

\subsubsection{Definition}
As shown in Figure~\ref{funct:smig}, we present a secure mapped intersection generation functionality $\mathcal{F}_{\textit{\SMIG}}$.
$\mathcal{F}_{\textit{\SMIG}}$ takes as input identifiers $\IDa$ from $P_a$ and identifiers $\IDb$ from $P_b$.
It outputs to both parties only the mapped intersection index pairs, denoted by $\MII$. Specifically, $\MII= [\pi_1(i), \pi_2(j) \mid $ $(i,j) \in \mathsf{Pairs}_\uparrow ]$, where $\mathsf{Pairs}_\uparrow$ consists of all intersection index pairs $(i,j)$ with $\IDa_i=\IDb_j$, ordered by $i$, and $\pi_1,\pi_2$ are permutation (shuffle) functions unknown to both parties.
Based on Theorem~\ref{theorem:c}, after the execution of $\mathcal{F}_{\textit{\SMIG}}$, neither party should learn any extra information beyond what is revealed by the row count $c$ of the joined table.

\begin{small}
\begin{figure}[h]
    \centering
    \fbox{
    \scalebox{0.9}{
        \parbox{0.49\textwidth}{
        {
        \centering \textbf{\underline{Functionality $\mathcal{F}_{\textit{\SMIG}}$}}\par}

        \textbf{Parameters:} The input data row count $n$.

        \textbf{Input:} $P_a$ inputs its identifiers $\IDa$. $P_b$ inputs its identifiers $\IDb$.

        \textbf{Functionality:}
        \begin{enumerate}[label=\arabic{enumi}.,leftmargin=1.2em,itemsep=0pt, topsep=0pt, partopsep=0pt, parsep=0pt]
        \item $P_a$ samples two permutations $\pi^a_1,\pi^a_2: [n] \mapsto [n]$. $P_b$ samples two permutations $\pi^b_1, \pi^b_2: [n] \mapsto [n]$. Let $\pi_1=\pi^a_1\circ\pi^b_1$ and $\pi_2=\pi^b_2\circ\pi^a_2$.
        \item Let $\textsf{Pairs} = [ (i,j) \mid \IDa_i == \IDb_j ]$ and let $\mathsf{Pairs}_\uparrow$ denote $\mathsf{Pairs}$ sorted in ascending order by the first element $i$ of each pair.
        \item Define the mapped intersection index pairs $\mathsf{MIPairs} = [\pi_1(i), \pi_2(j)]_{(i,j) \in \mathsf{Pairs}_\uparrow }$.
        \item Return $\mathsf{MIPairs}$ to both $P_a$ and $P_b$.
        \end{enumerate}
        }
    }
    }
    \setlength{\abovecaptionskip}{0pt}
    \setlength{\belowcaptionskip}{0pt}
    \smallskip
    \caption{Ideal functionality of \SMIG.}
    \label{funct:smig}
\end{figure}
\end{small}

\begin{theorem}
\label{theorem:c}
The functionality $\mathcal{F}_{\textit{\SMIG}}$ in Figure~\ref{funct:smig} reveals no information beyond what is revealed by the joined table row count $c$ to both parties.
\begin{proof}
    The proof is presented in Appendix~\ref{app:proof}.
\end{proof}

\end{theorem}

\subsubsection{Construction}
\label{sec:smig_construction}

As shown in Protocol~\ref{pro:smig}, the secure mapped intersection generation protocol $\Pi_{\rm \SMIG}$ consists of an offline phase, a setup phase, and an online phase. The online phase of $\Pi_{\rm \SMIG}$ consists of an ECC-based ID encrypting step and a mapped intersection generation step. Below, we introduce these steps in detail.

\noindent \textbf{Offline Phase.}
Each party $P_*$ ($* \in \{a,b\}$) samples two distinct random permutations $\pi^*_1,\pi^*_2 \in [n]\mapsto[n]$. Both parties keep their permutations for future use.

\noindent \textbf{Setup Phase.}
Each party first locally shuffles its identifiers (IDs) and then encrypts the shuffled IDs using the ECC key. Specifically,
(a) For $P_a$, it first shuffles its IDs with $\pi^a_1$ to obtain the shuffled IDs $\IDaa$. Next, $P_a$ generates an ECC key $\alpha$ and uses $\alpha$ to encrypt the shuffled IDs, obtaining $\alpha H(\IDaa)$, where $H(\cdot)$ maps each ID to an elliptic-curve point.
(b) For $P_b$, it first shuffles its IDs with $\pi^b_2$ to obtain the shuffled IDs $\IDbb$. Next, $P_b$ generates an ECC key $\beta$ and uses $\beta$ to encrypt the shuffled IDs, obtaining $\beta H(\IDbb)$.

\noindent \textbf{Step 1: ECC-based ID Encrypting.}
\rone{To enable $P_a$ to later obtain the mapped intersection index pairs, the parties engage in an exchange of encrypted IDs using ECC, as follows:}
(1) $P_a$ sends the encrypted and shuffled IDs $\alpha H(\IDaa)$ to $P_b$.
(2) $P_b$ first shuffles the received data $\alpha H(\IDaa)$ with $\pi^b_1$ to obtain dual-shuffled IDs $\alpha H(\IDaab)$, where the composition permutation $\pi_1 = \pi^a_1 \circ \pi^b_1$. Next, using its private key $\beta$, $P_b$ encrypts the dual-shuffled IDs to obtain the dual-encrypted and dual-shuffled IDs $\beta\alpha H(\IDaab) = [\beta(x)$$]_{x \in \alpha H(\IDaab)}$. Finally, $P_b$ sends $\beta\alpha H(\IDaab)$ and its self-shuffled and encrypted IDs $\beta H(\IDbb)$ to $P_a$.
(3) $P_a$ decrypts the received data $\beta\alpha H(\IDaab)$ using its private key $\alpha$, obtaining $\beta H(\IDaab) = [\alpha^{-1}(x)]_{x \in \beta\alpha H(\IDaab)}$.
\rone{
Based on the hardness of the ECDLP, it is computationally infeasible for $P_a$ to recover $H(\IDaab)$ from $\beta H(\IDaab)$ or recover $H(\IDbb)$ from $\beta H(\IDbb)$.}
At this point, $P_a$ holds $\beta H(\IDaab)$ and $\beta H(\IDbb)$, both encrypted under $P_b$'s private key $\beta$.

\begin{figure}[htbp]
    \centering
    \subfigtopskip=0pt
    \subfigbottomskip=1pt
    \subfigcapskip=-4pt
    \includegraphics[width=0.36\textwidth]{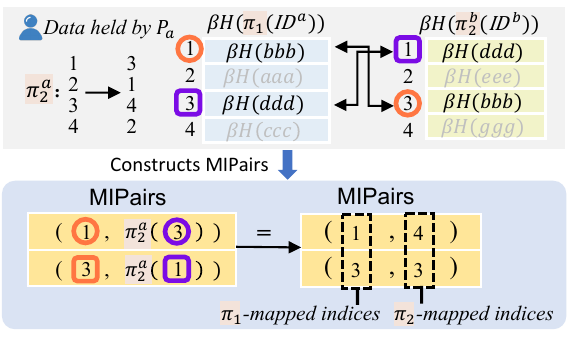}
    \caption{
    \rtwo{
    An illustration of constructing $\MII$ by $P_a$ in Step 2-(1) of $\Pi_{\text{SMIG}}$ (Protocol~\ref{pro:smig}), using the example data described at the beginning of Section~\ref{sec:design}.
    }
    }
    \label{fig:mig}
\end{figure}

\begin{small}
\begin{protocol}{$\Pi_{\rm \SMIG}$}{smig}
{
\textbf{Parameters:}
The input data row count $n$.
The ECC key bit length $\sigma$.
A hash function $H(\cdot)$ used for hashing an element to a point on the elliptic curve.
\\
\textbf{Input:} $P_a$ inputs identifiers $\IDa$. $P_b$ inputs identifiers $\IDb$.
\\
\textbf{Output:} Both $P_a$ and $P_b$ obtain the mapped intersection index pairs $\MII$.
\\
\textbf{Offline}:
$P_a$ samples two random permutations $\pi^a_1,\pi^a_2 \in [n]\mapsto[n]$. $P_b$ samples two random permutations $\pi^b_1,\pi^b_2 \in [n]\mapsto[n]$.
\\
\textbf{Setup}:
\begin{itemize} [leftmargin=1em,itemsep=0pt, topsep=0pt, partopsep=0pt, parsep=0pt]
\item[-] \textbf{[Shuffling and Encrypting Inputs of $P_a$]}. $P_a$ shuffles $\IDa$ with $\pi^a_1$ to obtain $\IDaa$. $P_a$ generates an ECC private key $\alpha$ and computes  $\alpha H(\IDaa) = [\alpha H(id)]_{id \in \IDaa}$.

\item[-] \textbf{[Shuffling and Encrypting Inputs of $P_b$]}. $P_b$ shuffles $\IDb$ with $\pi^b_2$ to obtain $\IDbb$. $P_b$ generates an ECC private key $\beta$ and computes $\beta H(\IDbb) = [\beta H(id)]_{id \in \IDbb}$.
\end{itemize}

\textbf{Online:}
\begin{enumerate}[label=\arabic{enumi}.,leftmargin=1.5em, partopsep=0pt, parsep=0pt,itemsep=0pt]
    \item \textbf{[ECC-based ID Encrypting]}.
    \begin{enumerate}[label=(\arabic{enumii}),leftmargin=1em,itemsep=0pt]
    \item $P_a$ sends $\alpha H(\IDaa)$ to $P_b$.
    \item $P_b$ shuffles $\alpha H(\IDaa)$ with $\pi^b_1$ to obtain dual-shuffled IDs $\alpha H(\IDaab)$, where the composition permutation $\pi_1 = \pi^a_1 \circ \pi^b_1$. Then, using its private key $\beta$, $P_b$ encrypts the dual-shuffled IDs to obtain the dual-encrypted IDs $\beta\alpha H(\IDaab) = [\beta(x)]_{x \in \alpha H(\IDaab)}$. $P_b$ sends this data $\beta\alpha H(\IDaab)$ and its encrypted and shuffled IDs $\beta H(\IDbb)$ to $P_a$.
    \item  $P_a$ decrypts the received data $\beta\alpha H(\IDaab)$ using its private key $\alpha$, obtaining $\beta H(\IDaab) = [\alpha^{-1}(x)]_{x \in \beta\alpha H(\IDaab)}$.
    \end{enumerate}
    \item \textbf{[Mapped Intersection Generation]}.
    \begin{enumerate}[label=(\arabic{enumii}),leftmargin=1.5em, topsep=0pt, partopsep=0pt, parsep=0pt]
    \item For $i\in[n]$, if $P_a$ can find some $j\in[n]$ such that $\beta H(\IDaab)_i == \beta H(\IDbb)_j$ (implying both are encrypted from the same intersection identifier $id$):
    \begin{enumerate}[label=-, leftmargin=1em,itemsep=0pt, topsep=0pt, partopsep=0pt, parsep=0pt]
    \item $P_a$ maps $j$ to $\pi^a_2(j)$.
    Let $i'$ denote the index of $id$ in $\IDb$, so that $j = \pi^b_2(i')$ and $\pi^a_2(j) = \pi^a_2(\pi^b_2(i')) = \pi_2(i')$.
    \item $P_a$ adds the pair $\left(i, \pi^a_2(j)\right)$ to $\MII$,
    where the first and second entries are the $\pi_1$-mapped index of $id$ in $\IDa$ and the $\pi_2$-mapped index of $id$ in $\IDb$, respectively.
    \end{enumerate}
    \item $P_a$ outputs $\MII$ (also sends $\MII$ to $P_b$).
    \end{enumerate}
\end{enumerate}
}
\end{protocol}
\end{small}

\noindent \textbf{Step 2: Mapped Intersection Generation.}
In this step, $P_a$ first obtains the mapped intersection index pairs $\MII$ from encrypted data $\beta H(\IDaab)$ and $\beta H(\IDbb)$, both of which are encrypted under $P_b$'s private ECC key $\beta$.
We provide a detailed explanation of $P_a$’s procedure for constructing $\MII$ in the next paragraph.
Secondly, $P_a$ outputs $\MII$ and sends $\MII$ to $P_b$.

\noindent \textit{$P_a$’s procedure for constructing $\MII$.}
As shown in Figure~\ref{fig:mig}, the core idea of this procedure relies on the observation that both $\beta H(\IDaab)$ and $\beta H(\IDbb)$ contain the same encrypted value $\beta H(id)$ for every intersection identifier $id$. Moreover, $\beta H(\IDaab)$ is securely shuffled under the composition permutation $\pi_1$, i.e., $\pi^a_1 \circ \rone{\pi^b_1}$, while $\beta H(\IDbb)$ is shuffled under $P_b$'s permutation $\pi^b_2$.
To generate the mapped intersection index pairs $\MII$, $P_a$ proceeds as follows.
For each $i \in [n]$ in $\beta H(\IDaab)$, if $P_a$ can find some $j \in [n]$ such that $\beta H(\IDaab)_i == \beta H(\IDbb)_j$,
\begin{itemize}[leftmargin=*,itemsep=0pt, topsep=0pt, partopsep=0pt, parsep=0pt]
    \item Index $i$ is exactly the $\pi_1$-mapped index of $id$ in $\IDa$, since encrypted IDs $\beta H(\IDaab)$ is $\pi_1$-shuffled.
    \item $P_a$ maps $j$ to $\pi^a_2(j)$. Let $i'$ denote the index of $id$ in $\IDb$, so that $j = \pi^b_2(i')$ and $\pi^a_2(j) = \pi^a_2(\pi^b_2(i')) = \pi_2(i')$, which is the $\pi_2$-mapped index of $id$ in $\IDb$.
    \item $P_a$ then appends the pair $\left(i, \pi^a_2(j)\right)$ to $\MII$, where the first entry is the $\pi_1$-mapped index of $id$ in $\IDa$ and the second entry is the $\pi_2$-mapped index of $id$ in $\IDb$.
\end{itemize}
Note that $P_a$ learns no information about the actual values of the intersection IDs, since $P_a$ only obtains encrypted and shuffled IDs.

\noindent \textit{\rtwo{Example:}} \rtwo{
As shown in Figure~\ref{fig:mig}, after step~1, $P_a$ obtains $\beta H(\IDaab)$ and $\beta H(\IDbb)$. Then, in step 2-(1), $P_a$ constructs the mapped intersection index pairs $\MII =[(1,4),(3,3)]$, which are then sent to $P_b$ in step 2-(2).}

\subsection{Secure Feature Alignment}
\label{subsec:fa}
In the secure feature alignment step, based on the outputs of $\Pi  _{\rm \SMIG}$ (Protocol~\ref{pro:smig}), $P_a$ and $P_b$ execute our proposed mapped intersection-based secure feature alignment protocol $\Pi_{\rm MI\text{-}SFA}$ to securely obtain the joined table shares $\share{\mathcal{D}}$.

As shown in Protocol~\ref{pro:sfa}, $\Pi_{\rm MI\text{-}SFA}$ takes as input feature columns $F^a$ from $P_a$ and feature columns $F^b$ from $P_b$. In addition, it takes as input the mapped intersection index pairs $\MII$ from both parties. After execution, $\Pi_{\rm MI\text{-}SFA}$ outputs to both parties joined table shares $\share{D}$.
$\Pi_{\rm MI\text{-}SFA}$ consists of an offline phase and an online phase.
In the offline phase, $P_a$ and $P_b$ generates random matrices required for the online execution of $\Pi_{\rm O\text{-}Shuffle}$ (Protocol~\ref{pro:oshuffle}).
The online phase consists of three steps.
In steps 1 and 2,  $P_a$ and $P_b$ securely dual-shuffle features from $P_a$ and $P_b$, respectively.
For dual-shuffling, each party performs the first-layer shuffle by locally using its permutation and performs the second-layer shuffle by executing $\Pi_{\rm O\text{-}Shuffle}$ (Protocol~\ref{pro:oshuffle}).
In step 3, both parties locally extract the shares of the joined table according to the mapped intersection index pairs $\MII$. Below, we introduce these steps in detail.

\noindent \textbf{Offline Phase.}
Each party $P_*$ ($* \in \{a,b\}$) generates a random matrix $R^* \in \mathbb{Z}^{n \times m_*}_{2^{\ell}}$.
Using the permutations $\pi^b_2$ and $\pi^a_2$ generated in the offline phase of $\Pi_{\rm \SMIG}$ (Protocol~\ref{pro:smig}) by $P_b$ and $P_a$, respectively, both parties involke the oblivious shuffle \(\mathcal{F}_{\textit{O-Shuffle}}\) to obtain the shares of shuffled random matrix, specifically $\share{\pi^b_2(R^a)}$, $\share{\pi^a_2(R^b)}$.

\par \noindent \textbf{Step 1: Dual-Shuffling ($\pi_1=\pi^a_1\circ\pi^b_1$) Features from $P_a$.}
Both parties securely obtain shares of $P_a$'s dual-shuffled features simply as follows:
(1) \(P_a\) locally shuffles its features \(F^a\) with its permutation \(\pi^a_1\) to obtain \(\pi^a_1(F^a)\).
(2) $P_a$ and $P_b$ jointly execute the online phase of $\Pi_{\rm O\text{-}Shuffle}$ (Protocol~\ref{pro:oshuffle}), where $P_a$ inputs shuffled features $\pi^a_1(F^a)$, random matrix $R^a$,  shuffled random matrix share $\share{\pi^b_1(R^a)}_a$ and receives dual-shuffled feature share $\share{\pi^b_1(\pi^a_1(F^a))}_a$, i.e., $\share{\pi_1(F^a)}_a$, while $P_b$ inputs its permutation $\pi^b_1$ (generated in the offline phase of Protocol~\ref{pro:smig}),  shuffled random matrix share $\share{\pi^b_1(R^a)}_b$ and receives dual-shuffled feature share $\share{\pi_1(F^a)}_b$.
Note that the only communication in the online phase of $\Pi_{\rm O\text{-}Shuffle}$ (Protocol~\ref{pro:oshuffle}) is $P_a$ sending its features masked with random matrix $R^a$ to $P_b$.

\par \noindent \textbf{Step 2: Dual-Shuffling ($\pi_2=\pi^b_2\circ\pi^a_2$) Features from $P_b$.}
In a symmetrical step analogous to the first step, the parties now dual-shuffle the features $F^b$ from $P_b$, with the roles of $P_a$ and $P_b$ exchanged. This process yields dual-shuffled shares $\share{\pi_2(F^b)}_a$ and $\share{\pi_2(F^b)}_b$ to $P_a$ and $P_b$, respectively.

\begin{small}
\begin{protocol}{$\Pi_{\rm MI\text{-}SFA}$}{sfa}
{
\textbf{Parameters:}
The input data row count $n$.
$P_a$'s feature dimension $m_a$.
$P_b$'s feature dimension $m_b$.
\\
\textbf{Input:} $P_a$ inputs feature columns $F^a$ and the mapped intersection index pairs $\MII$. $P_b$ inputs feature columns $F^b$ and the mapped intersection index pairs $\MII$.
\\
\textbf{Output:} $P_a$ and $P_b$ obtain the joined table shares $\share{\mathcal{D}}_a$ and $\share{\mathcal{D}}_b$, respectively, where $\mathcal{D}$ consists of the matched rows.
\\
\textbf{Offline:}
\begin{enumerate}[label=\arabic{enumi}.,leftmargin=0.8em,itemsep=0pt, topsep=0pt, partopsep=0pt, parsep=0pt]
        \item[-] \textbf{[Shuffling Random Matrix of $P_a$]}. $P_a$ samples random matrix $R^{a} \in \mathbb{Z}_{2^\ell}^{n \times m_a} $. $P_a$ and $P_b$ invoke $\mathcal{F}_{\textit{O-Shuffle}}$, where $P_a$ inputs $R^{a}$ and receives $\share{\pi^b_1(R^a)}_a$, while $P_b$ inputs permutation $\pi^{b}_1$, generated in the offline phase of $\Pi_{\rm \SMIG}$ (Protocol~\ref{pro:smig}), and receives $\share{\pi^b_1(R^a)}_b$.
        \item[-] \textbf{[Shuffling Random Matrix of $P_b$]}. $P_b$ samples random matrix $R^{b} \in \mathbb{Z}_{2^\ell}^{n \times m_b}$. $P_a$ and $P_b$ invoke $\mathcal{F}_{\textit{O-Shuffle}}$, where $P_a$ inputs permutation $\pi^a_2$, generated in the offline phase of $\Pi_{\rm \SMIG}$ (Protocol~\ref{pro:smig}), and receives $\share{\pi^a_2(R^b)}_a$, while $P_b$ inputs $R^{b}$ and receives $\share{\pi^a_2(R^b)}_b$.
\end{enumerate}
\textbf{Online:}
\begin{enumerate}[label=\arabic{enumi}.,leftmargin=1.2em,itemsep=0pt, topsep=0pt, partopsep=0pt, parsep=0pt]
\item \textbf{Dual-Shuffling ($\pi_1=\pi^a_1 \circ \pi^b_1$) Features from $P_a$}:
    \begin{enumerate}[label=(\arabic{enumii}),leftmargin=0.6em]
    \item  \( P_a \) shuffles its features \( F^a \) with \( \pi^a_1\) to obtain \( \pi^a_1(F^a) \).
    \item $P_a$ and $P_b$ execute online phase of $\Pi_{\rm O\text{-}Shuffle}$ (Protocol~\ref{pro:oshuffle}), where $P_a$ inputs $\pi^a_1(F^a)$, random matrix $R^a$, shuffled random matrix share $\share{\pi^b_1(R^a)}_a$ and receives dual-shuffled feature share $\share{\pi_1(F^a)}_a$, while $P_b$ inputs permutation $\pi^b_1$, generated in the offline phase of $\Pi_{\rm \SMIG}$ (Protocol~\ref{pro:smig}), shuffled random matrix share $\share{\pi^b_1(R^a)}_b$ and receives dual-shuffled feature share $\share{\pi_1(F^a)}_b$.
    \end{enumerate}
\item \textbf{Dual-Shuffling ($\pi_2=\pi^b_2 \circ \pi^a_2$) Features from $P_b$}:
    \begin{enumerate}[label=(\arabic{enumii}),leftmargin=0.6em]
    \item  \( P_b \) shuffles its features \( F^b \) with \( \pi^b_2\) to obtain \( \pi^b_2(F^b) \).
    \item $P_a$ and $P_b$ execute online phase of $\Pi_{\rm O\text{-}Shuffle}$ (Protocol~\ref{pro:oshuffle}), where $P_b$ inputs $\pi^b_2(F^b)$, random matrix $R^b$, shuffled random matrix share $\share{\pi^a_2(R^b)}_b$, and receives dual-shuffled feature share $\share{\pi_2(F^b)}_b$, while $P_a$ inputs permutation $\pi^a_2$, generated in the offline phase of $\Pi_{\rm \SMIG}$ (Protocol~\ref{pro:smig}), shuffled random matrix share $\share{\pi^a_2(R^b)}_a$ and receives dual-shuffled feature share $\share{\pi_2(F^b)}_a$.
    \end{enumerate}
\item \textbf{Locally Extract Intersection Data by the Mapped Intersection Indices}: Each party $P_*$ ( $* \in \{a,b\}$) extracts intersection data share from dual-shuffled feature shares according to $\MII$, yielding $\share{\mathcal{D}}_* = \{\share{\pi_1(F^a)_i}_* \| \share{\pi_2(F^b)_j}_*\}_{(i,j) \in \MII}$.
\end{enumerate}
}
\end{protocol}
\end{small}

\par \noindent \textbf{Step 3: Locally Extract Intersection Data by the Mapped Intersection.}
This step is straightforward.
Each party holds the dual-shuffled feature shares $\share{\pi_1(F^a)}$, $\share{\pi_2(F^b)}$ and mapped intersection index pairs $\MII$, where each pair are the $\pi_1$-mapped index of intersection identifier $id$ in $\IDa$ and the $\pi_2$-mapped index of intersection identifier $id$ in $\IDb$.
Thus, each party locally extracts shares from $\share{\pi_1(F^a)}$ whose indices are the first entry in each pair of $\MII$ and extracts shares from $\share{\pi_2(F^b)}$ whose indices are the second entry in each pair of $\MII$, forming its joined table share. That is, each party collects $\share{\mathcal{D}} = \{\share{\pi_1(F^a)_i} , \share{\pi_2(F^b)_j}\}_{(i,j) \in \MII}$.

\noindent \textit{\rtwo{Example:}}
\rtwo{
Recall that the mapped intersection index pairs $\MII = [(1,4),(3,3)]$ are obtained in Protocol~\ref{pro:smig}.
After steps~1 and~2, $P_a$ obtains dual-shuffled feature shares $\share{\pi_1(F^a)}_a = \langle [61,12,49,37] \rangle_a = [43, 8, 23, 9]$ and $\share{\pi_2(F^b)}_a = \share{[36,28,13,51]}_a = [25,15,4,16]$,
while $P_b$ obtains dual-shuffled feature shares $\share{\pi_1(F^a)}_b = \langle[61,12,$ $ 49,37]\rangle_b = [18, 4, 26, 28]$ and $\share{\pi_2(F^b)}_b = \share{[36,28,13,51]}_b = [11, $ $13, 9, 35]$.
In step 3, using $\MII$, each party locally collects $\share{\mathcal{D}} = \{\share{\pi_1(F^a)_i} , \share{\pi_2(F^b)_j}\}_{(i,j) \in \MII} = \share{[(61,51),(49,13)]}$. That is, $P_a$ obtains the joined table share $\share{\mathcal{D}}_a = [(43,16),(23,4)]$, while $P_b$ obtains the joined table share $\share{\mathcal{D}}_b = [(18,35),(26,9)]$.
}

\section{Security Proof}
\label{sec:proof}

Below, we prove the security of two protocols in \fname, namely $\Pi_{\rm \SMIG}$ (Protocol~\ref{pro:smig}) and $\Pi_{\rm MI\text{-}SFA}$ (Protocol~\ref{pro:sfa}), \rall{against semi-honest adversaries in the two-party setting (the definition of the security model is shown in Section~\ref{sec:preliminaries}).}

\begin{theorem}
\label{theorem:smig}
The \SMIG protocol $\Pi_{\rm \SMIG}$ (Protocol~\ref{pro:smig}) securely realizes $\mathcal{F}_{\textit{\SMIG}}$ in Figure~\ref{funct:smig} against semi-honest adversary $\mathcal{A}$.
\end{theorem}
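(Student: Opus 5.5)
We construct a simulator for each corrupted party separately. Each simulator gets the corrupted party's input and its output $\MII$ from $\mathcal{F}_{\textit{\SMIG}}$. It must produce a view that is computationally indistinguishable from the real one, jointly with the honest party's output. Since $\MII$ is a deterministic function of the parties' permutations and IDs, and both parties output the same $\MII$, correctness is immediate. The real protocol computes exactly $[\pi_1(i),\pi_2(j)]_{(i,j)\in\mathsf{Pairs}_\uparrow}$, except with negligible probability of a collision in $H$ or in the scalar multiplications. So it suffices to argue indistinguishability of the views, with the outputs fixed consistently. We work in the random oracle model for $H$ and rely on the Decisional Diffie--Hellman (DDH) assumption over the curve group. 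This is the standard assumption for ECDH-PSI.

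\textbf{Corrupted $P_b$.} $P_b$'s view consists of:
\begin{itemize}
\item its input $\IDb$;
\item its randomness $\pi^b_1,\pi^b_2,\beta$;
\item the received message $\alpha H(\IDaa)$;
\item the received $\MII$.
\end{itemize}
$\mathsf{Sim}_b$ samples $\pi^b_1,\pi^b_2,\beta$ honestly and sends $n$ uniformly random curve points in place of $\alpha H(\IDaa)$. It outputs $\MII$ as given. Under DDH in the random oracle model, $\{\alpha H(x)\}_{x}$ is indistinguishable from random points, even given $P_b$'s own $H(y)$ values. Here we use that $\alpha$ is unknown to $P_b$ and that the $H(x)$ are random group elements. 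The entries are also permuted by $\pi^a_1$, which is hidden from $P_b$. The delicate point is consistency between this message and $\MII$. The first coordinates of $\MII$ are $\pi_1(i)=\pi^b_1(\pi^a_1(i))$, so they reveal $\pi^a_1(i)$ for matched $i$. The matched positions of the simulated message are also consistent with $\MII$ and $P_b$'s own IDs. However, $P_b$ cannot check this without knowing $\alpha$. Checking it would require distinguishing $\alpha H(id)$ from random for a known $id$, which is exactly a DDH instance: given $(H(id),\alpha H(id'),\ldots)$, decide whether the exponent matches. A hybrid argument over the $n$ positions, in which we replace $\alpha H(x)$ by random one entry at a time, completes the argument.

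\textbf{Corrupted $P_a$.} $P_a$'s view consists of:
\begin{itemize}
\item its input $\IDa$;
\item its randomness $\pi^a_1,\pi^a_2,\alpha$;
\item the received lists $\beta\alpha H(\IDaab)$ and $\beta H(\IDbb)$.
\end{itemize}
$\mathsf{Sim}_a$ samples $\pi^a_1,\pi^a_2,\alpha$ and picks $n$ fresh random points $g_1,\dots,g_n$. It sets the first list to $\alpha g_1,\dots,\alpha g_n$, so that $P_a$ "decrypts" it to $g_1,\ldots,g_n$. It builds the second list of length $n$ as follows: for each $(i,k)\in\MII$ it places $g_i$ at position $(\pi^a_2)^{-1}(k)$, and it fills the remaining positions with fresh random points. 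This makes $P_a$'s local recomputation of $\MII$ reproduce exactly the given output.

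For indistinguishability, we argue in the real world as follows:
\begin{itemize}
\item $\pi^b_1$ is uniform and unknown to $P_a$, so the first list is $\beta H$ applied to $\IDa$ under a uniformly random permutation.
\item $\pi^b_2$ is uniform, so the positions of $P_b$'s entries are uniformly random subject to the match structure.
\end{itemize}
By DDH (multi-instance, via hybrids), the values $\beta H(x)$ for all distinct IDs are jointly pseudorandom, even to a party knowing the IDs $x$. Equal IDs give equal points, and the simulator reproduces exactly that equality pattern. The distribution of permutations matches because $\pi_1$ and $\pi_2$ are uniform conditioned on $\MII$.

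\textbf{Main obstacle.} The main difficulty is the corrupted-$P_a$ case, for two reasons. First, $P_a$ knows the preimage IDs, so pseudorandomness of $\beta H(x)$ must hold even for known $x$. This needs DDH together with $H$ modeled as a random oracle, not just ECDLP as the paper informally cites. Second, the joint distribution of (positions, equality pattern, $\MII$) must match exactly. We will handle this with a careful hybrid sequence:
\begin{enumerate}
\item replace $\beta H(\cdot)$ on non-intersection IDs of both lists by random points;
\item replace $\beta H(\cdot)$ on intersection IDs by random points shared across the two lists;
\item argue that the resulting permutation distribution equals the simulator's, using the uniformity of $\pi^b_1$ and $\pi^b_2$ given $\MII$.
\end{enumerate}
By Theorem~\ref{theorem:c}, $\MII$ itself leaks only $c$, so the overall view leaks nothing beyond $c$.
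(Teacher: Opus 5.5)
Your argument is sound, apart from one definitional point below that the paper shares, and it takes a genuinely different route from the paper's.

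\textbf{How the two proofs differ.} The paper uses dummy-input simulators. $\mathsf{Sim}_a$ fabricates an $\IDb$ consisting of $c$ random elements of $\IDa$ plus $n-c$ outsiders, samples its own $\beta,\pi^b_1,\pi^b_2$, runs the protocol honestly, and recomputes $\MII$ from that run; $\mathsf{Sim}_b$ is symmetric. It justifies the ciphertext hybrid by ``the security of ECC'' (calling it statistical) and the $\MII$ hybrid by the marginal uniformity of $\pi_1,\pi_2$. You instead program the transcript from the functionality's output. For corrupt $P_b$ you send uniform points. For corrupt $P_a$ you use fresh $g_i$'s with the match pattern planted at positions $(\pi^a_2)^{-1}(k)$ for $(i,k)\in\MII$. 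You justify both by DDH in the random-oracle model.

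\textbf{What your route buys.} First, consistency with the output. The paper's own definition compares the joint distribution of view and output. A $\MII$ recomputed from a fake $\IDb$ and fresh $\pi^b_1,\pi^b_2$ almost never equals the $\MII$ the functionality hands to the honest party. The paper's simulator would therefore need the extra step of sampling its dummy permutations conditioned on the given $\MII$; your construction matches it by design. Second, the correct assumption. $P_a$ knows the preimages $x$ and can evaluate $H$, so hiding which $\beta H(x)$ belongs to which $x$ is a DDH-type pseudorandomness statement, which ECDLP hardness alone does not give. The resulting indistinguishability is computational, not statistical. The paper's template is shorter and generic for ECDH-PSI proofs, but it needs both repairs to go through.

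\textbf{Correction to your first paragraph (shared with the paper).} The protocol does not output $[\pi_1(i),\pi_2(j)]_{(i,j)\in\mathsf{Pairs}_\uparrow}$ ``exactly''. In step 2-(1), $P_a$ scans the $\pi_1$-shuffled list, so $\MII$ comes out sorted by its first coordinate. By contrast, $\mathcal{F}_{\textit{SMIG}}$ orders pairs by the unmapped index $i$, which puts the first coordinates in uniformly random order. For $c\ge 2$, a distinguisher that checks sortedness separates real from ideal outputs with advantage $1-1/c!$. Restate the functionality with $\MII$ sorted by first coordinate (or treated as a set); your simulators then go through unchanged.
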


\begin{proof}
We exhibit simulators $\mathsf{Sim_a}$ and $\mathsf{Sim_b}$ for simulating the view of the corrupt $P_a$ and $P_b$, respectively, and prove that the simulated view is indistinguishable from the real one via standard hybrid arguments.

\noindent \underline{Case 1 (Corrupt $P_a$).}
The simulator $\mathsf{Sim_a}$ receives $P_a$'s input $\IDa$ and $P_a$'s output $\MII$. $\mathsf{Sim_a}$ simulate the view of $\mathcal{A}$. The incoming messages to $\mathcal{A}$ is encrypted and shuffled IDs $\beta\alpha H(\IDaab)$ and $\beta H(\IDbb)$ received in Online step 1-(2).
$\mathsf{Sim_a}$ generates the view for $\mathcal{A}$ as follows:
\begin{itemize}[leftmargin=*,itemsep=0pt]
    \item $\mathsf{Sim_a}$ samples an ECC key $\beta$ from $\mathbb{Z}_{2^\sigma}$, two random permutations $\pi^b_1,\pi^b_2:[n] \mapsto [n]$. In addition, $\mathsf{Sim_a}$ constructs an $n$-sized IDs $\IDb$ by randomly sampling $c$ IDs from $P_a$'s IDs $\IDa$ and the remaining $n-c$ IDs from the complement of $\IDa$, where $c$ is the row count of the joined table, i.e., $|\MII|$. Moreover, $\mathsf{Sim_a}$ follows the real protocol to sample an ECC key $\alpha$ from $\mathbb{Z}_{2^\sigma}$, two random permutations $\pi^a_1,\pi^a_2:[n] \mapsto [n]$.
    \item $\mathsf{Sim_a}$ shuffles $\IDa$ with $\pi_1 = \pi^a_1 \circ \pi^b_1$ to obtain $\IDaab$. $\mathsf{Sim_a}$ computes $\beta\alpha H(\IDaab) = [\beta\alpha H(id)]_{id \in \IDaab}$.
    \item $\mathsf{Sim_a}$ shuffles $\IDb$ with $\pi^b_2$ to obtain $\IDbb$. $\mathsf{Sim_a}$ computes $\beta H(\IDbb) = [\beta H(id)]_{id \in \IDbb}$.
    \item $\mathsf{Sim_a}$ appends $\beta\alpha H(\IDaab)$ and $\beta H(\IDbb)$ to the view of $\mathcal{A}$.
    \item $\mathsf{Sim_a}$ follows the real protocol to construct output $\MII$ from $\alpha H(\IDaab)$ and $\alpha H(\IDbb)$.
    \item $\mathsf{Sim_a}$ appends $\MII$ to the view of $\mathcal{A}$.
\end{itemize}
We argue that the view output by $\mathsf{Sim_a}$ is indistinguishable from the real one. We first define three hybrid transcripts $T_0$, $T_1$, $T_2$, where $T_0$ is the real view of $P_a$, and $T_2$ is the output of $\mathsf{Sim_a}$.
\begin{enumerate}[label=\arabic{enumi}.,leftmargin=*,itemsep=0pt]
    \item $\textit{Hybrid}_0$. The first hybrid is the real interaction described in Protocol~\ref{pro:smig}. Here, an honest party $P_b$ uses real inputs and interacts with the corrupt party $P_a$. Let $T_0$ denote the real view of $P_a$.
    \item $\textit{Hybrid}_1$. Let $T_1$ be the same as $T_0$, except that $\beta\alpha H(\IDaab)$ and $\beta H(\IDbb)$ are replaced by the simulated values.
    By the security of ECC, the simulated $\beta\alpha H(\IDaab)$ and $\beta H(\IDbb)$ have the same distribution as they would in the real protocol. Hence, $T_0$ and $T_1$ are statistically indistinguishable.
    \item $\textit{Hybrid}_2$. Let $T_2$ be the same as $T_1$, except that $\MII$ is replaced by the simulated values.
    For each pair in $\MII$, the first and second entries are exactly the $\pi_1$-mapped index of the intersection identifier $id$ in $\IDa$ and the $\pi_2$-mapped index of the same intersection identifier $id$ in $\IDb$,
    In the real view, $\pi_1 = \pi^a_1 \circ \pi^b_1$ and $\pi_2 = \pi^b_2 \circ \pi^a_2$ are two random composition permutations,  which are unknown to $P_a$, their specific mappings are essentially random. Thus $\MII$ are uniformly random over $[n]$.
    The simulator generates random permutations $\pi^b_1$ and $\pi^b_2$.
    Even though $\pi^a_1$ and $\pi^a_2$ are fixed, applying a uniformly random permutation $\pi^b_1$ or $\pi^b_2$ afterward means the final mapping $\pi_1$ and $\pi_2$ are uniformly random.
    Thus, the resulting distributions of $\MII$ in both the real and simulated views are statistically identical. Therefore, the simulated view is indistinguishable from the real view.
    This hybrid is exactly the view output by the simulator.
\end{enumerate}

\noindent \underline{Case 2 (Corrupt $P_b$).}
The simulator $\mathsf{Sim_b}$ receives $P_b$'s input $\IDb$ and $P_b$'s output $\MII$. The incoming view of $\mathcal{A}$ consists of the encrypted and shuffled IDs $\alpha H(\IDaa)$ received in Online step 1-(1) and the output $\MII$ received in Online step 2-(2). $\mathsf{Sim_b}$ generates the view for $\mathcal{A}$ as follows:
\begin{itemize}[leftmargin=*,itemsep=0pt]
    \item $\mathsf{Sim_b}$ samples an ECC key $\alpha$ from $\mathbb{Z}_{2^\sigma}$, two random permutations $\pi^a_1, \pi^a_2:[n] \mapsto [n]$. In addition, $\mathsf{Sim_b}$ constructs an $n$-sized IDs $\IDa$ by randomly sampling $c$ IDs from $P_b$'s IDs $\IDb$ and the remaining $n-c$ IDs from the complement of $\IDb$, where $c$ is the row count of the joined table, i.e., $|\MII|$. Moreover, $\mathsf{Sim_b}$ follows the real protocol to sample an ECC key $\beta$ from $\mathbb{Z}_{2^\sigma}$, two random permutations $\pi^b_1,\pi^b_2:[n] \mapsto [n]$.
    \item $\mathsf{Sim_b}$ shuffles $\IDa$ with $\pi^a_1$ to obtain $\IDaa$. $\mathsf{Sim_b}$ computes $\alpha H(\IDaa) = [\alpha H(id)]_{id \in \IDaa}$.
    \item $\mathsf{Sim_b}$ appends $\alpha H(\IDaa)$ to the view of $\mathcal{A}$.
    \item $\mathsf{Sim_b}$ computes $\beta H(\IDaab)$ and $\beta H(\IDbb)$. Then, $\mathsf{Sim_b}$ involve the procedure for constructing $\MII$.
    \item $\mathsf{Sim_b}$ appends $\MII$ to the view of $\mathcal{A}$.
\end{itemize}

We argue that the view output by $\mathsf{Sim_b}$ is indistinguishable from the real one. We first define three hybrid transcripts $T_0$,$T_1$, $T_2$, where $T_0$ is the real view of $P_b$, and $T_2$ is the output of $\mathsf{Sim_b}$.
\begin{enumerate}[label=\arabic{enumi}.,leftmargin=*,itemsep=0pt]
    \item $\textit{Hybrid}_0$. The first hybrid is the real interaction described in Protocol~\ref{pro:smig}. Here, an honest party $P_a$ uses real inputs and interacts with the corrupt party $P_b$. Let $T_0$ denote the real view of $P_b$.
    \item $\textit{Hybrid}_1$. Let $T_1$ be the same as $T_0$, except that $\alpha H(\IDaa)$ is replaced by the simulated values. By the security of ECC, the simulated $\alpha H(\IDaa)$ has the same distribution as it would in the real protocol.
    \item $\textit{Hybrid}_2$. Let $T_2$ be the same as $T_1$, except that $\MII$ is replaced by the simulated values. Similar to Case 1, in the real view, $\pi_1 = \pi^a_1 \circ \pi^b_1$ and $\pi_2 = \pi^b_2 \circ \pi^a_2$ are two random permutations,  which are unknown to $P_a$, their specific mappings are essentially random. Thus, $\MII$ are uniformly random over $[n]$.
    The simulator generates random permutations to replace them. Thus, the resulting distributions of $\MII$ in both the real and simulated views are statistically identical.
    This hybrid is exactly the view output by the simulator. \qedhere
\end{enumerate}
\end{proof}

\begin{theorem}
\label{theorem:sfa}
The MI-SFA protocol $\Pi_{\rm MI\text{-}SFA}$ (Protocol~\ref{pro:sfa}) securely realizes $\mathcal{F}_{\textit{MI-SFA}}$ against semi-honest adversary $\mathcal{A}$.
\end{theorem}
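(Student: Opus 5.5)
We would prove Theorem~\ref{theorem:sfa} in the $\mathcal{F}_{\textit{O-Shuffle}}$-hybrid model, with the offline invocations replaced by ideal calls. We first fix $\mathcal{F}_{\textit{MI-SFA}}$ as the natural functionality. Its inputs are $F^a$ and $F^b$, together with the common $\MII$ and the permutations $\pi^a_1,\pi^a_2,\pi^b_1,\pi^b_2$ carried over from $\Pi_{\rm \SMIG}$. It outputs uniformly random shares $\share{\mathcal{D}}_a,\share{\mathcal{D}}_b$ with $\share{\mathcal{D}}_a+\share{\mathcal{D}}_b=\mathcal{D}$.

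Correctness comes first. By the output specification of $\Pi_{\rm O\text{-}Shuffle}$, Step~1 yields valid shares of $\pi^b_1(\pi^a_1(F^a))=\pi_1(F^a)$, and Step~2 yields valid shares of $\pi_2(F^b)$. Take any $(i,j)\in\MII$. By construction in $\Pi_{\rm \SMIG}$, $i$ is the $\pi_1$-image of the index of some $id$ in $\IDa$, and $j$ is the $\pi_2$-image of the index of the same $id$ in $\IDb$. Hence $\pi_1(F^a)_i\|\pi_2(F^b)_j$ is exactly the joined row for $id$, and Step~3 outputs shares of $\mathcal{D}$.

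We then build simulators by symmetry and handle a corrupt $P_a$ in detail. The view of $P_a$ consists of its input, its randomness ($R^a$ and $\pi^a_1,\pi^a_2$), and the offline outputs $\share{\pi^b_1(R^a)}_a$ and $\share{\pi^a_2(R^b)}_a$. Its only online incoming message is $\hat{X}^b=\pi^b_2(F^b)-R^b$ from Step~2.

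$\mathsf{Sim_a}$ works as follows. It samples $R^a$ and permutations as in the real protocol. It samples $\hat{X}^b$ uniformly from $\mathbb{Z}_{2^\ell}^{n\times m_b}$. It sets $\share{\pi^b_1(R^a)}_a$ and $\share{\pi^a_2(R^b)}_a$ as follows:
\begin{itemize}
\item on rows $i$ appearing as first entries of $\MII$, the value of $\share{\pi^b_1(R^a)}_a$ is fixed to match the corresponding columns of the given output $\share{\mathcal{D}}_a$;
\item on rows $j$ appearing as second entries, the value of $\share{\pi^a_2(R^b)}_a$ is fixed in the same way;
\item all other rows are uniform.
\end{itemize}
This consistency works because in $\Pi_{\rm O\text{-}Shuffle}$ the receiver-of-mask party's output share equals its offline share.

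The indistinguishability argument uses the hybrids below.
\begin{enumerate}
\item $\hat{X}^b$ is uniform because $R^b$ is uniform and unknown to $P_a$. This is a one-time pad, so the distribution is perfectly identical.
\item The shares received from $\mathcal{F}_{\textit{O-Shuffle}}$ are uniform and independent of everything else in $P_a$'s view. We then check that the joint distribution of (view, both parties' outputs) matches the ideal one. In both worlds, $\share{\mathcal{D}}_a$ is uniform, and $\share{\mathcal{D}}_b=\mathcal{D}-\share{\mathcal{D}}_a$.
\end{enumerate}
Corrupt $P_b$ is symmetric. The incoming message $\pi^a_1(F^a)-R^a$ is simulated as uniform, and the offline shares are programmed against $\share{\mathcal{D}}_b$. $\MII$ is part of both parties' input, so it needs no simulation.

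We expect the main obstacle to be the joint-distribution step, namely showing that the real output shares are uniform subject to summing to $\mathcal{D}$ and independent of the rest of the view. The difficulty is that distinct pairs in $\MII$ must select distinct rows. We would argue this from the injectivity of $\pi_1$ and $\pi_2$ together with uniqueness of IDs within each table, so that each selected row's offline share is a fresh uniform value. If IDs were not unique, the argument would need adjusting.
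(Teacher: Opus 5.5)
\paragraph{The gap: programming of the Step~2 offline share.} Your simulator for a corrupt $P_a$ sets this share incorrectly, and the rule you cite to justify it fails in that step. In $\Pi_{\rm O\text{-}Shuffle}$, the output share equals the offline share only for the party that owns the data and \emph{sends} the masked matrix. The party holding the permutation \emph{receives} $\hat{X}$ and outputs $\pi(\hat{X})$ plus its offline share. In Step~2 of $\Pi_{\rm MI\text{-}SFA}$, $P_a$ is the permutation holder, so $\share{\pi_2(F^b)}_a=\pi^a_2(\hat{X}^b)+\share{\pi^a_2(R^b)}_a$.

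Suppose row $j$ of $\share{\pi^a_2(R^b)}_a$ is set equal to the $F^b$-columns of the corresponding row of $\share{\mathcal{D}}_a$, as you propose. A distinguisher can recompute $P_a$'s output from the simulated view, which contains $\pi^a_2$, $\hat{X}^b$ and that share. It obtains a value that differs from the supplied $\share{\mathcal{D}}_a$ by $\pi^a_2(\hat{X}^b)_j$, which is nonzero except with probability $2^{-\ell m_b}$. Since the definition compares view and outputs jointly, the simulation is distinguishable whenever $c\geq 1$.

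The fix is one line. Set that row to the $F^b$-part of $\share{\mathcal{D}}_a$ minus $\pi^a_2(\hat{X}^b)_j$; this is still uniform and independent of $\hat{X}^b$. A corrupt $P_b$ needs the same correction, but in Step~1: row $i$ of $\share{\pi^b_1(R^a)}_b$ must be the $F^a$-part of $\share{\mathcal{D}}_b$ minus $\pi^b_1(\hat{X}^a)_i$. The share $\share{\pi^a_2(R^b)}_b$ can be programmed directly.

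\paragraph{Comparison with the paper.} With this repair your argument goes through, and it is more complete than the paper's, which is only a reduction sketch. The paper observes that every step outside $\Pi_{\rm O\text{-}Shuffle}$ is local, that the only message $X-R$ is uniformly masked, and imports the security of $\Pi_{\rm O\text{-}Shuffle}$ from \texttt{iPrivJoin}. It never defines $\mathcal{F}_{\textit{MI-SFA}}$, argues correctness, or checks that the simulated view is consistent with the output shares.

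You instead open up the shuffle (offline calls ideal, online message simulated), fix the functionality explicitly, and program the view against $\share{\mathcal{D}}_*$. This forces you to confront what the sketch leaves implicit. One issue is view/output consistency, which is where the slip above occurred. The other is distinctness of the rows selected by $\MII$, which, as you say, needs injectivity of $\pi_1,\pi_2$ plus uniqueness of IDs within each table. The paper's black-box route is shorter and modular; yours gives a self-contained simulation in which the joint-distribution condition is actually verified.
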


\begin{proof}[Proof Sketch]
We argue the security of $\Pi_ {\rm MI\text{-}SFA}$ by reducing it to the security of its sub-protocol, $\Pi_{\rm O\text{-}Shuffle}$ (Protocol~\ref{pro:oshuffle}).
By design, $\Pi_{\rm MI\text{-}SFA}$ involves interaction only during execution of $\Pi_{\rm O\text{-}Shuffle}$. All other steps are performed locally and hence reveal no additional information to the other party.
\rone{
Intuitively,  $\Pi_{\rm O\text{-}Shuffle}$ is secure because its only communication step involves sending masked input data \(\hat{X} = X - R\), where \(R\) is a random matrix with entries sampled uniformly from \(\mathbb{Z}_{2^\ell}\). This ensures that \(\hat{X}\) is uniformly distributed and independent of \(X\), making any two inputs statistically indistinguishable from their masked versions. Consequently, a simulator can output a uniformly random matrix with distribution identical to the real execution.
As the security of $\Pi_{\rm O\text{-}Shuffle}$ is formally proven in~\cite{liu2023iprivjoin}, the security of Protocol~\ref{pro:sfa} follows.
}
\end{proof}
\section{Evaluation}
\label{sec:exp}

Our evaluation aims to answer the following questions:
\begin{itemize}[leftmargin=*,itemsep=0pt]
    \item Is it beneficial to use the redundancy-free secure data join protocol, \fname, instead of \CPSI~\cite{rindal2022blazing}, in the two-step SDA pipeline (secure data join and secure analytics) (Section~\ref{exp:e2e_cmp})?
    \item How does \fname perform relative to the SOTA redundancy-free secure data join protocol, \baseline~\cite{liu2023iprivjoin} (Section~\ref{exp:sda_cmp})?
\end{itemize}

\subsection{Implementation and Experiment Settings}
\label{sec:exp_set}
\noindent \textbf{Implementation.} We implement \fname using C++ based on the ECC library libsodium\footnote{\url{https://github.com/jedisct1/libsodium}}.
We represent all input data in fixed-point format over the ring $\mathbb{Z}_{2^{64}}$, as in the SDA frameworks~\cite{demmler2015aby,mohassel2017secureml}.
For ECC, we use Curve25519~\cite{bernstein2006curve25519}, which provides 128-bit security with a 256-bit key length. The curve equation of Curve25519 is $y^2 = (x^3 + 486662x^2 +x) \mod (2^{255} - 19) $.

\noindent \textbf{Experiment Environment.}
We perform all experiments on a single Linux server equipped with $2 \times$ 10-core 2.4 GHz Intel Xeon Silver 4210R and 1 TB RAM. Each party is simulated by a separate process with one thread.
We apply the tc tool\footnote{\url{https://man7.org/linux/man-pages/man8/tc.8.html}} to simulate a WAN setting with a bandwidth of 100 Mbps and 40ms round-trip time, following prior works~\cite{liu2023iprivjoin,mp-spdz}.

\noindent\textbf{Baseline.} We consider the following two baselines:
\begin{itemize}[leftmargin=*,itemsep=0pt, topsep=0pt, partopsep=0pt, parsep=0pt]
    \item  \baseline\cite{liu2023iprivjoin}: To the best of our knowledge, \baseline is the only existing protocol that provides the same functionality, i.e., a redundancy-free secure two-party data join protocol, as \fname.
    Since \baseline is not open-sourced, we re-implemented it, replacing its OPPRF libraries with faster ones~\cite{rindal2022blazing, rindal2021vole}.
    \item \CPSI~\cite{rindal2022blazing}: \CPSI~\cite{rindal2022blazing} is the SOTA circuit-based PSI protocol, which can be used for secure data join. \CPSI introduces many redundant dummy rows in the joined table (padded to the maximum length), other than the actual matched rows.
    We reproduce \texttt{CPSI}'s results using its open-source code\footnote{https://github.com/Visa-Research/volepsi.git}.
\end{itemize}
For OPPRF, we set the computational security parameter $\kappa = 128$ and the statistical security parameter $\lambda = 40$, consistent with the two baselines.
For Cuckoo hashing, we refer to the setting of \CPSI~\cite{rindal2022blazing}, with $\omega=1.27$ and 3 hashing functions.

\begin{table}[htbp]
\centering
\belowrulesep=0pt
\aboverulesep=0pt
\caption{Detailed information of datasets.}
\scalebox{0.86}{
\begin{tabular}{c|c|c}
\toprule
Dataset & Row Count ($n$) & Feature Dimension ($m$) \\
\toprule
Education Career Success & 5000 & 19 \\
Breast Cancer Gene & 1904 & 693 \\
Skin Cancer & 10000 & 785 \\
a9a & 32561 & 123 \\
ARCENE & 900 & 10000 \\
MNIST & 60000 & 784 \\
Tiny ImageNet & 100000 & 4096 \\
CelebA & 200000 & 5000 \\
\bottomrule
\end{tabular}
}
\label{tab:datasets}
\end{table}

\begin{table*}[t]
\belowrulesep=0pt
\aboverulesep=0pt
\centering
\caption{End-to-end online comparison of \fname, \baseline~\cite{liu2023iprivjoin} and \CPSI~\cite{rindal2022blazing} on two types of SDA tasks: secure statistical analysis and secure training (the term "secure" is omitted in the table). The costs of the SDA tasks for \fname and \baseline are identical, as both produce the same inputs for SDA tasks. The efficiency of secure training is evaluated in one epoch.
}
\makebox[\textwidth][c]{
\scalebox{0.86}{
\setlength{\tabcolsep}{2.2pt}
\begin{tabular}{c||c|c|cBc|c|c||c|c|cBc|c|c}
\toprule
\multicolumn{1}{c||}{}       & \multicolumn{6}{c||}{Education Career Sucess}                                         & \multicolumn{6}{c}{Breast Cancer Gene}
\\ \cmidrule{2-13}
Steps
                           & \multicolumn{3}{cB}{Time (s)}          & \multicolumn{3}{c||}{Communication (MB)}  & \multicolumn{3}{cB}{Time (s)}           & \multicolumn{3}{c}{Communication (MB)} \\ \cmidrule{2-13}
                           & {\small \fname}         & {\small \baseline\cite{liu2023iprivjoin}}    & {\small \CPSI\cite{rindal2022blazing}}    & {\small \fname}        & {\small \baseline\cite{liu2023iprivjoin}}    & {\small \CPSI\cite{rindal2022blazing}}       & {\small \fname}        & {\small \baseline\cite{liu2023iprivjoin}}     & {\small \CPSI\cite{rindal2022blazing}}     & {\small \fname}       & {\small \baseline\cite{liu2023iprivjoin}}     & {\small \CPSI\cite{rindal2022blazing}}      \\ \toprule
Secure Data Join      & 1.17         & 2.94         & 3.61    & 1.21        & 7.96         & 7.36       & 1.46        & 19.37         & 11.83    & 10.25      & 67.44         & 61.53     \\ \midrule
Statistical Analyze & \multicolumn{2}{c|}{3.04}    & 8.50    & \multicolumn{2}{c|}{0.50}   & 1.86       & \multicolumn{2}{c|}{11.19}   & 21.50    & \multicolumn{2}{c|}{103.22} & 186.63    \\ \midrule
Total                      & 4.21         & 5.98         & 12.11   & 1.71        & 8.46         & 9.22       & 12.65       & 30.56         & 33.33    & 113.47     & 170.66        & 248.16    \\ \hhline{=============}
\multicolumn{1}{c||}{}    & \multicolumn{6}{c||}{Skin Cancer}                                                 & \multicolumn{6}{c}{a9a}                                            \\ \cmidrule{2-13}
Steps
                           & \multicolumn{3}{cB}{Time (s)}          & \multicolumn{3}{c||}{Communication (MB)}  & \multicolumn{3}{cB}{Time (s)}           & \multicolumn{3}{c}{Communication (MB)} \\ \cmidrule{2-13}
                           & {\small \fname}         & {\small \baseline\cite{liu2023iprivjoin}}    & {\small \CPSI\cite{rindal2022blazing}}    & {\small \fname}        & {\small \baseline\cite{liu2023iprivjoin}}    & {\small \CPSI\cite{rindal2022blazing}}       & {\small \fname}        & {\small \baseline\cite{liu2023iprivjoin}}     & {\small \CPSI\cite{rindal2022blazing}}     & {\small \fname}       & {\small \baseline\cite{liu2023iprivjoin}}     & {\small \CPSI\cite{rindal2022blazing}}      \\ \toprule
Secure Data Join      & 7.11         & 108.07       & 73.63   & 60.87       & 385.51       & 319.05     & 8.24        & 61.88         & 33.62    & 33.98      & 247.85        & 228.79    \\ \midrule
Training & \multicolumn{2}{c|}{2371.44} & 3735.75 & \multicolumn{2}{c|}{683.14} & 1073.44    & \multicolumn{2}{c|}{7181.97} & 11365.90 & \multicolumn{2}{c|}{290.28} & 459.31    \\ \midrule
Total                      & 2378.55      & 2479.51      & 3809.38 & 744.01      & 1068.65      & 1392.49    & 7190.21     & 7243.85       & 11399.52 & 324.26     & 538.13        & 688.10  \\
\bottomrule
\end{tabular}
}
}
\label{tab:e2eapplication}
\end{table*}
\noindent \textbf{Datasets.}
Table~\ref{tab:datasets} summarizes all real-world datasets used in our evaluation.
To answer the first question (Section~\ref{exp:e2e_cmp}), we use four real-world datasets: Education Career Success~\cite{Education_and_Career_Success}, Breast Cancer Gene~\cite{Breast_Cancer_Gene_Expression}, Skin Cancer~\cite{tschandl2018ham10000}, and a9a~\cite{a9a}, to evaluate on secure data join and SDA tasks.
To answer the second question (Section~\ref{exp:sda_cmp}), we first evaluate \fname on eight datasets: the above four plus ARCENE~\cite{arcene}, MNIST~\cite{mnist}, Tiny ImageNet~\cite{le2015tiny}, and the large-scale CelebA~\cite{liu2015faceattributes}, following prior SDA studies~\cite{mohassel2017secureml,tan2021cryptgpu}.
These datasets vary in row counts (from $900$ to $200000$) and feature dimensions (from $19$ to $10000$).
\rtwo{For each dataset, we perform vertical partitioning, where the two parties hold disjoint feature columns but share the same row count as the original dataset.
We set the intersection rate to $\rho = 80\%$, consistent with \baseline~\cite{liu2023iprivjoin}. Specifically, we randomly select 80\% overlapping samples and fill the remaining 20\% of non-overlapping rows with randomly generated samples.
As analyzed in Appendix~\ref{app:comm_cmp}, the intersection rate ($\rho$) has a negligible impact on the performance of the baselines and \fname, so setting $\rho = 80\%$ provides a fair comparison.}
In addition, we configure that $P_a$ holds the data with a greater number of feature columns, as the baselines are more efficient in this setup compared to the reverse configuration (see Section~\ref{sec:exp_mm} for details).
Furthermore, we generate synthetic datasets with sizes up to 100 GB to evaluate efficiency under varying parameters.

\subsection{Two-step SDA Pipeline Evaluation}
\label{exp:e2e_cmp}

We compare \CPSI~\cite{rindal2022blazing}, \baseline~\cite{liu2023iprivjoin}, and \fname in a two-step SDA pipeline: secure data join and SDA task execution. The SDA task leverages the outputs of secure data join as its input data.

\noindent \textbf{SDA tasks.}
We evaluate two types of SDA tasks: secure statistical analysis and secure training.
(1) For secure statistical analysis, we examine the following two scenarios:
\begin{itemize}[leftmargin=*,itemsep=0pt, topsep=0pt, partopsep=0pt, parsep=0pt]
    \item Chi-square test: an education department and a tax department securely compute a chi-square test value on education data and income data~\cite{drosatos2011privacy}. To simulate this scenario, the feature columns in the Education Career Success dataset \cite{Education_and_Career_Success} are vertically partitioned into education-related and income-related features, which are held separately by the two parties.
    \item Pearson correlation: a hospital and a genomic company securely compute the Pearson correlation between disease and gene data \cite{lindell2020secure}.
    To simulate this scenario, the feature columns in the Breast Cancer Gene dataset~\cite{Breast_Cancer_Gene_Expression} are vertically partitioned into disease-related features and gene-related features, which are held separately by the two parties.
\end{itemize}
We design secure Chi-square test and secure Pearson correlation protocols (provided in Appendix~\ref{app:sec_statisical}).
Both are implemented using SMPC primitives in \MPSPDZ~\cite{mp-spdz}.
(2) For secure training, we examine two scenarios: One party holds users' image data while the other holds their labels; two parties hold disjoint subsets of users' features. In both scenarios, the parties aim to securely train a model on the joined table shares.
We use the Skin Cancer~\cite{tschandl2018ham10000} and a9a~\cite{a9a} datasets to simulate the above two scenarios, respectively.
We employ \MPSPDZ~\cite{mp-spdz}, a widely used open-source SMPC framework, to securely train logistic regression models over 10 epochs.

In terms of efficiency, Table~\ref{tab:e2eapplication} shows that \fname significantly accelerates the secure data join process, achieving up to $10.36\times$ faster running time and up to $6.73\times$ reduction in communication size compared to \CPSI, and up to $15.20\times$ faster running time and up to $7.29\times$ reduction in communication size compared to \baseline. The inferior performance of \baseline relative to \CPSI stems from its reliance on multiple rounds of oblivious shuffling to remove redundant rows present in \CPSI.
An important observation is that, during the downstream SDA process, \CPSI incurs an increase of $1.58 \sim 2.80\times$ in running time and $1.57 \sim 3.72\times$ in communication size.
The inefficiency of the SDA process caused by \CPSI arises from the redundant padded outputs in the join results.
Specifically, \CPSI outputs a joined table of row count \(N = \omega \cdot n\) (with \(\omega = 1.27\) in our experiments), where \(\rho \cdot n\) rows are intersection data (intersection rate $\rho$). For \(\rho = 80\%\), the redundant fraction of \CPSI's outputs is \(\bigl((\omega - \rho) \cdot n\bigr) / N \approx 37.01\%\).
In terms of correctness, Table~\ref{tab:e2eacc} shows that \CPSI leads to catastrophic error rate blowup in downstream secure statistical analysis and degrades the accuracy of secure training, with the Chi-square test statistic deviating by up to $13858.55\%$.
\rtwo{
Due to the limited performance and accuracy of \CPSI, we exclude it from subsequent secure data join evaluations.}

\begin{table}[htbp]
\belowrulesep=0pt
\aboverulesep=0pt
\centering
\caption{
Correctness of SDA tasks: \fname vs. \CPSI~\cite{rindal2022blazing} compared to performing data analytics on plaintext.
(`Education' = Education Career Success, `Breast' = Breast Cancer Gene ).
}
\scalebox{0.88}{
\setlength{\tabcolsep}{7pt}
\begin{tabular}{c||c|c||c|c}
\toprule
\textbf{SDA tasks}
        & \multicolumn{2}{c||}{\textbf{Stats. Percent Error}} & \multicolumn{2}{c}{\textbf{Training Accuracy Bias}} \\
Dataset & Education                 & Breast                 & Skin Cancer             & a9a                \\ \midrule
{\small \fname} & 0.00\%                     & 0.08\%                  & $-0.27\%$                 & $-0.42\%$
\\
{\small \CPSI\cite{rindal2022blazing}}    & 13858.55\%                 & 77.20\%                 & $-0.36\%$                 & $-0.50\%$            \\
\bottomrule
\end{tabular}
}
\label{tab:e2eacc}
\end{table}

\begin{figure*}[htbp]
    \centering
    \subfigtopskip=0pt
    \subfigbottomskip=0pt
    \subfigcapskip=-6pt
    \subfigure[Time (s)]{
        \includegraphics[width=0.31\textwidth]{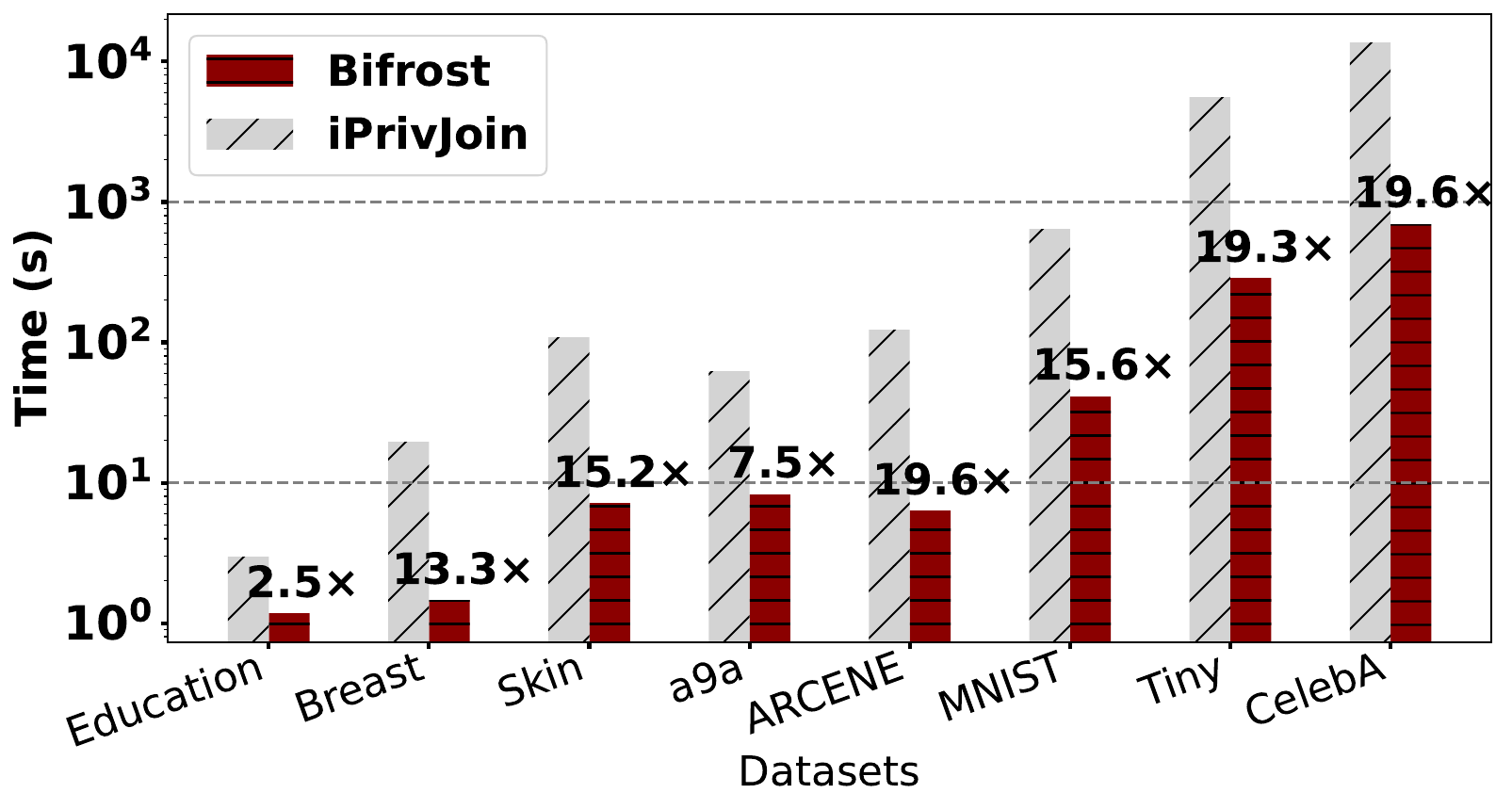}
    }
    \hfill
    \subfigure[Communication (MB)]{
        \includegraphics[width=0.31\textwidth]{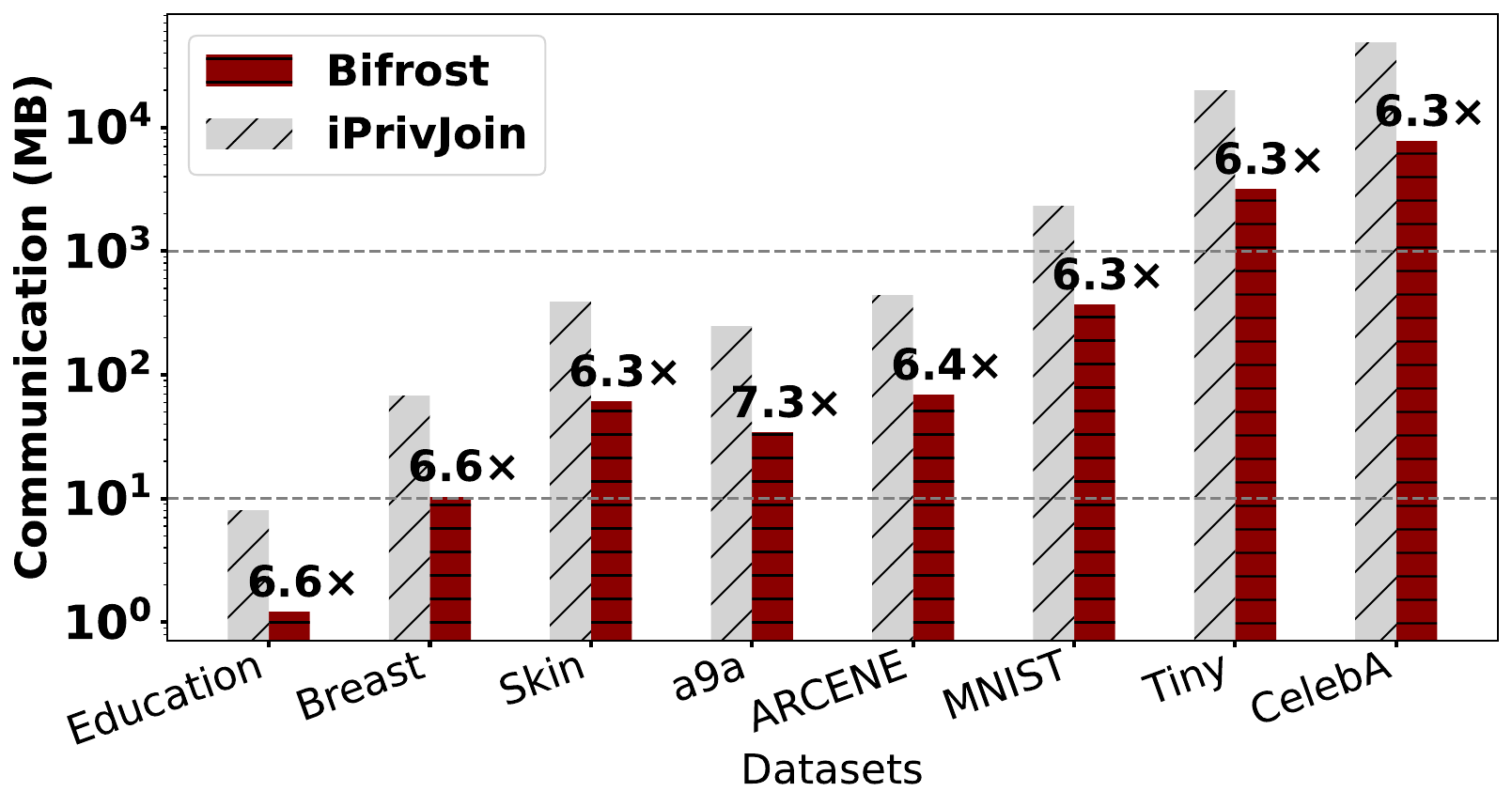}
    }
    \hfill
    \subfigure[Memory (MB)]{
        \includegraphics[width=0.31\textwidth]{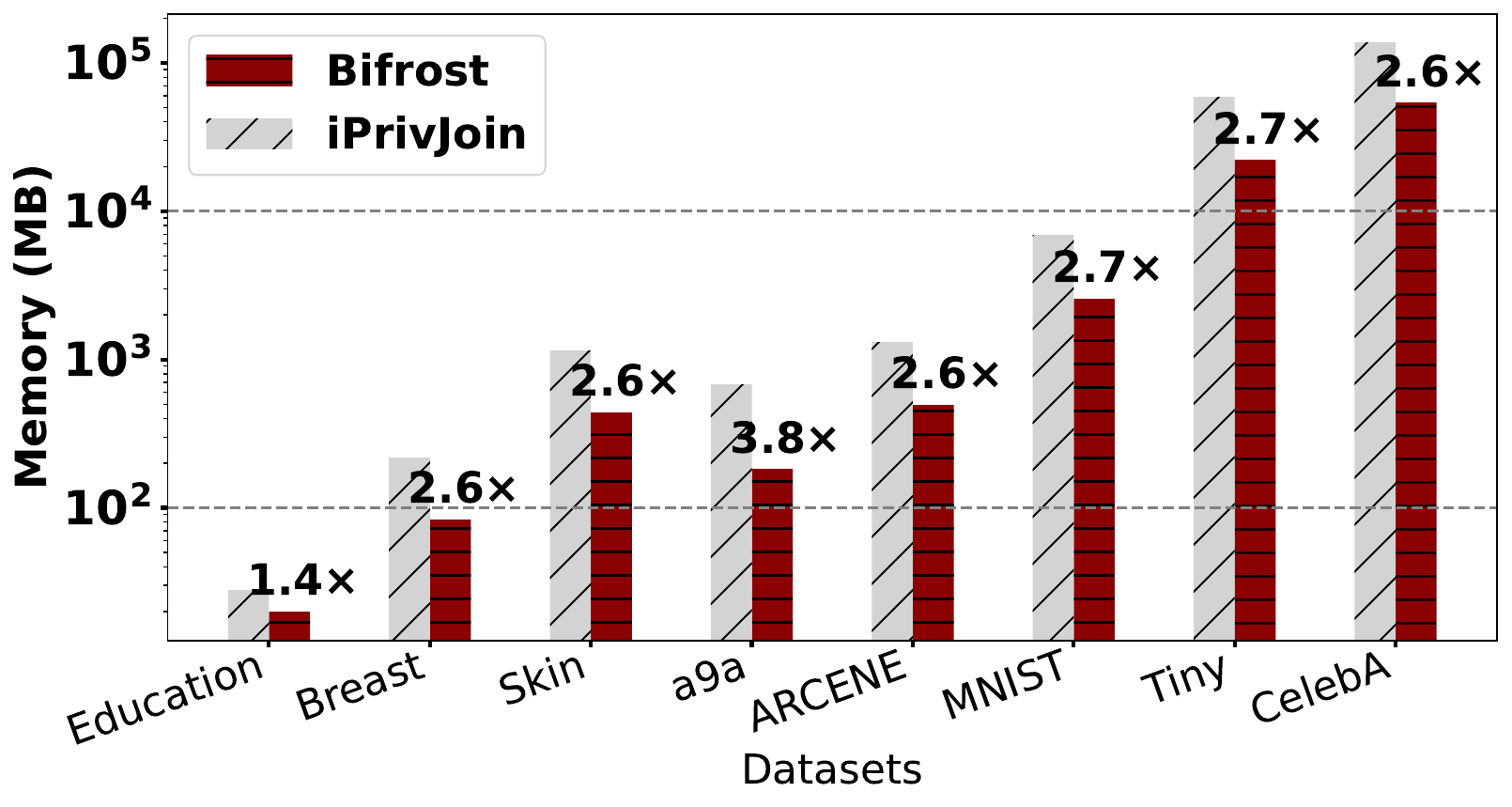}
    }
    \caption{Online Comparison of \fname vs. \baseline~\cite{liu2023iprivjoin} on eight real-world datasets.}
    \label{fig:e2ed}
\end{figure*}
\subsection{Secure Data Join Evaluation}
\label{exp:sda_cmp}

\subsubsection{Evaluation on Real-World Datasets across Three Phases.}
We compare \fname with \baseline~\cite{liu2023iprivjoin} across three phases (offline, setup, and online) using eight real-world datasets summarized in Table~\ref{tab:datasets}.
Detailed results are provided in Appendix~\ref{app:e_d}.
Table~\ref{tab:e2ed} and Figure~\ref{fig:e2ed} highlight the key results, summarized as follows:
\begin{itemize}
[leftmargin=*, itemsep=0pt, topsep=0pt, partopsep=0pt, parsep=0pt]
    \item In the online phase, \fname significantly outperforms \baseline by $2.54 \sim 19.64\times$ in running time.
    Moreover, \fname outperforms \baseline by $1.40 \sim 3.75\times$ in memory cost.
    The performance gain of \fname mainly stems from a reduction of $84.15\% \sim 86.29\%$ in communication size compared to \baseline.
    Specifically, the inefficiency of \baseline stems from using cryptographic primitive OPPRF and multiple rounds of oblivious shuffle for redundant data removal.
    In contrast, \fname requires a single round of oblivious shuffle while eliminating OPPRF. In addition, \fname requires only 4 communication rounds, fewer than the 11 rounds required by \baseline.
    \item Considering both the setup and online phases, \fname achieves an average speedup of $17.59\times$ over \baseline. Despite incurring high computational overhead in the setup phase due to ECC operations, \fname maintains superior overall performance. In addition, considering all three phases, \fname achieves an average speedup of $2.57\times$ in running time and a $2.59\times$ reduction in communication size. This improvement is primarily attributed to \fname's offline phase, which requires less than half of the communication size of \baseline.
    \item The performance advantage of \fname scales with dataset size: on the largest dataset, CelebA (15 GB), \baseline requires 227.20 minutes to complete the join, whereas \fname finishes in only 11.57 minutes, yielding a $19.64\times$ speedup in online running time, thereby highlighting its superior efficiency and scalability.
\end{itemize}

\begin{table}[htbp]
\belowrulesep=0pt
\aboverulesep=0pt
\caption{Average comparison of \fname vs. \baseline~\cite{liu2023iprivjoin} across three phases on eight real-world datasets. }
\scalebox{0.88}{
\setlength{\tabcolsep}{2pt}
\begin{tabular}{c||ccBcBcc}
\toprule
\multirow{2}{*}{Protocol} & \multicolumn{2}{cB}{Offline}                                                                                                                               & \multirow{2}{*}{\begin{tabular}[c||]{@{}c@{}}Setup\\ Time (s)\end{tabular}} & \multicolumn{2}{c}{Online}                                                                                                                              \\ \cline{2-6}
                          & Time (s)                                                                    & Comm. (MB)                                                                  &                                                                           & Time (s)                                                                   & Comm. (MB)                                                                 \\ \toprule
\multirow{2}{*}{{\small \fname  }}  & \textbf{11754.62} & \textbf{88307.52} & \multirow{2}{*}{12.28}                                                    & \textbf{130.86} & \textbf{1415.78} \\
                          &        (2.39$\times$)                                                                     &           (2.53$\times$)                                                                   &                                                                           &          (19.24$\times$)                                                                  &        (6.34$\times$)                                                                    \\ \hline
{\small \baseline\cite{liu2023iprivjoin}}                 & 28059.99                                                                    & 223704.32                                                                   & 0.22                                                                      & 2518.01                                                                    & 8983.04   \\
\bottomrule
\end{tabular}

}
\label{tab:e2ed}
\end{table}

In the following evaluation across varying parameters, we focus exclusively on the online-phase performance using datasets up to 100 GB.
Results on smaller-scale datasets are given in Appendix~\ref{app:e_small}, and offline-phase performance results are given in Appendix~\ref{app:e_off}.

\subsubsection{Evaluation across Different Row Counts.}
We conduct experiments on datasets of row count $n$ from $2^{14}$ to $2^{20}$ with a fixed feature dimension (\(m_a = m_b = 6400\)), which corresponds to dataset sizes scaling from 1.56 GB up to 100 GB.
As shown in Figure~\ref{fig:e2en}, \fname significantly outperforms \baseline by \(21.40 \sim 21.72\times\) in running time and \(7.70 \sim 7.73\times\) in communication size. Furthermore, the runtime advantage of \fname remains steady at \(21.59\times\), and its communication size reduction remains at about \(87.05\%\). This is because both \fname and \baseline have communication and computation costs that grow linearly in \(n\). Therefore, the relative efficiency advantage remains stable for fixed feature dimensions.

\begin{figure}[htbp]
    \centering
    \subfigtopskip=0pt
    \subfigcapskip=-6pt
    \subfigbottomskip=0pt
    \subfigure[Time (\rone{min})]{\includegraphics[width=0.23\textwidth]{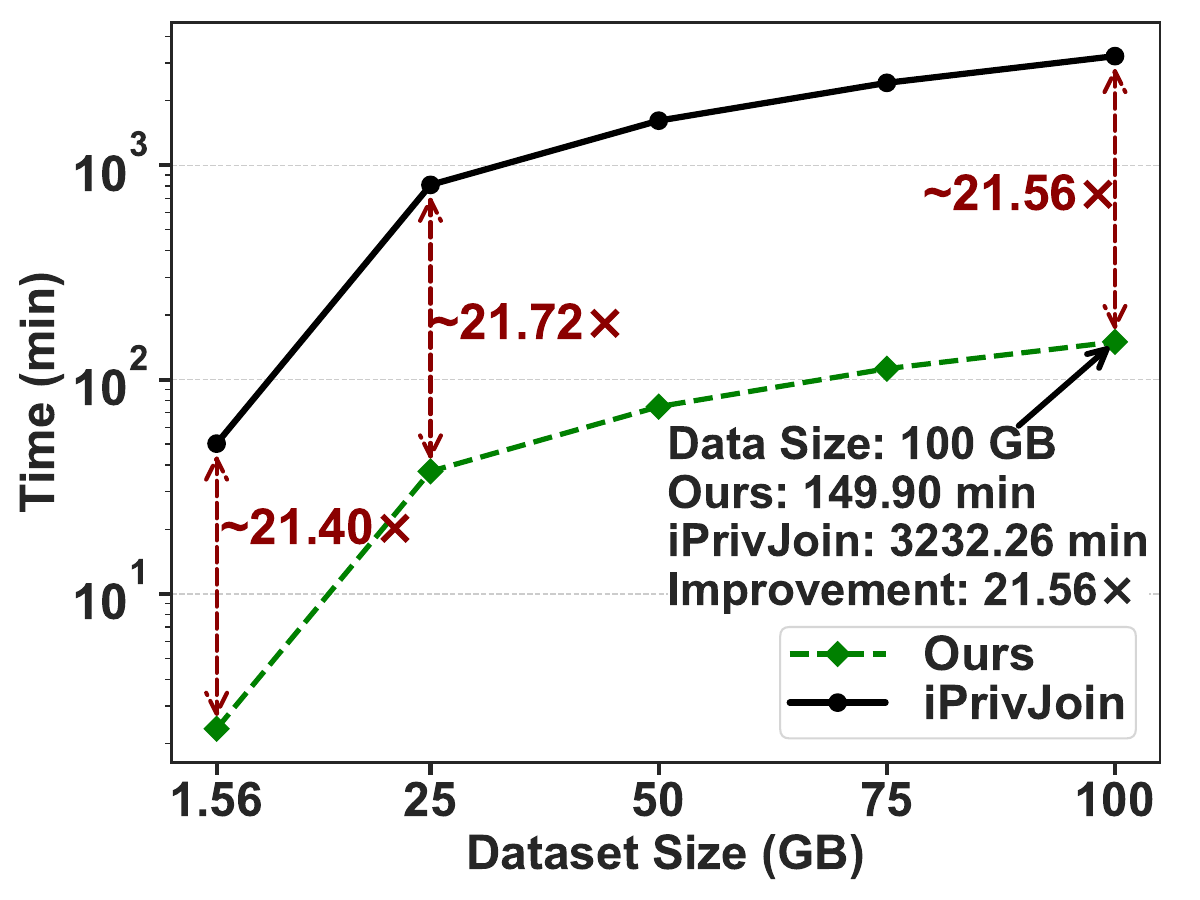}}
    \hfill
    \subfigure[Communication (\rone{GB})]{
    \includegraphics[width=0.23\textwidth]{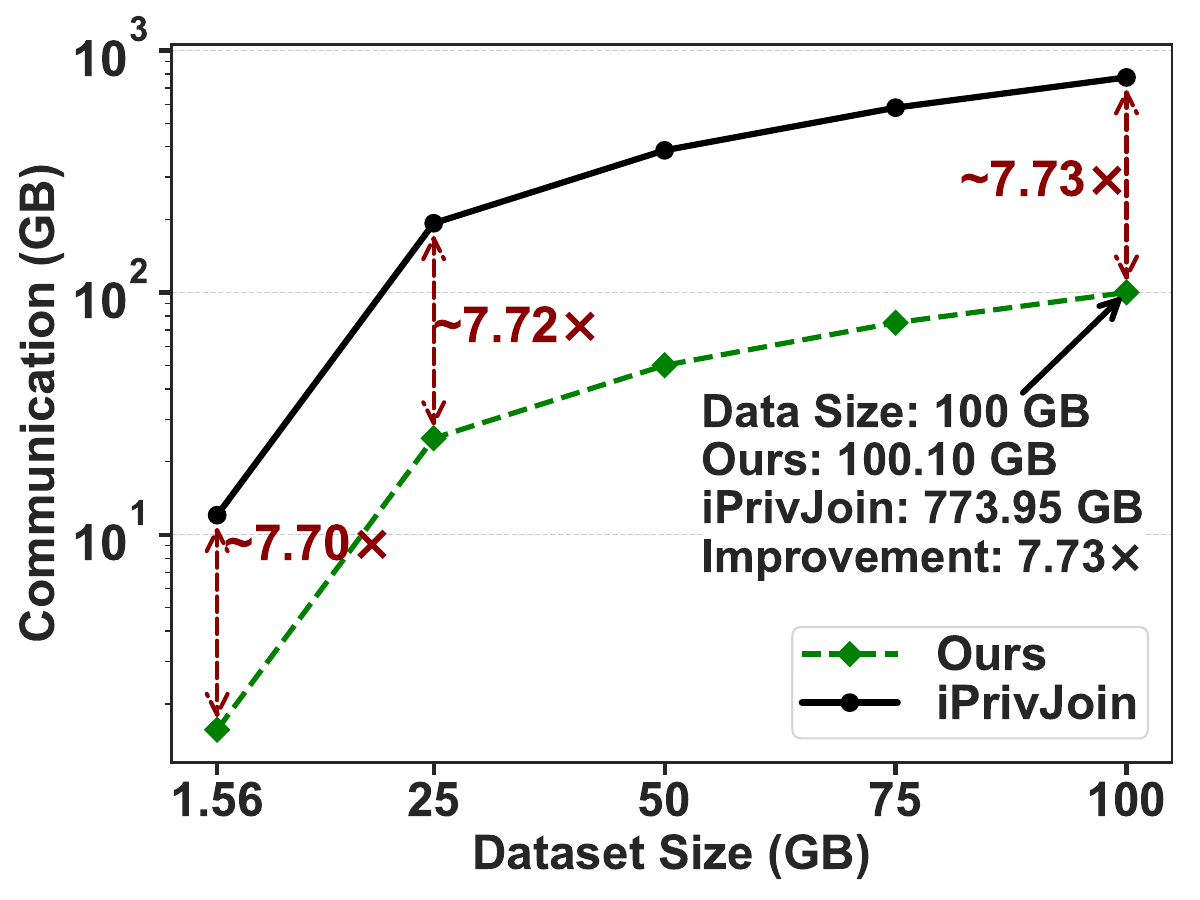}}
\caption{Comparison of \fname vs. \baseline~\cite{liu2023iprivjoin} on varying row counts when fixing feature dimension.
}
\label{fig:e2en}
\end{figure}

\subsubsection{Evaluation across Different Feature Dimensions.}
\label{sec:exp_mm}
We evaluate the effect of feature dimensionality under two settings: (1) We vary the feature dimensions $m_a = m_b$ from 100 to 6400 while fixing the dataset size at $n = 2^{20}$. (2) We vary one party’s feature dimension $m_*$ from 100 to 6400, where $* \in \{a,b\}$, while fixing the other party’s feature dimension to 100 and the dataset size to $n = 2^{20}$.
The corresponding results are presented in Figure~\ref{fig:e2em1} and Figure~\ref{fig:e2em2}, respectively.
We summarize as follows:
\begin{itemize}[leftmargin=*, itemsep=0pt, topsep=0pt, partopsep=0pt, parsep=0pt]
    \item \fname outperforms \baseline by $9.58 \sim 22.32\times$ in running time and $7.52\sim 9.07\times$ in communication size.
    In addition, the performance improvement of \fname becomes more pronounced as the feature dimension increases. For instance, as \(m_a = m_b\) increases from 100 to 6400, the running time improvement of \fname over \baseline increases from $9.58\times$ to $21.46\times$.
    This trend occurs because the computational and communication costs of \baseline grow more rapidly with the feature dimension than those of \fname. Therefore, \fname exhibits better efficiency when the feature dimension is large.
    \item
    \fname achieves better performance compared to \baseline, up to $22.32\times$ when increasing only $m_b$ and up to $19.52\times$ when increasing only $m_a$.
    The gap widens as $m_b$ grows because \baseline must share $P_b$’s features with $P_a$ via OPPRF, whereas \fname avoids this bottleneck. Thus \fname scales more efficiently when one party holds higher-dimensional features.
\end{itemize}

\begin{figure}[htbp]
    \centering
    \subfigtopskip=0pt
    \subfigbottomskip=0pt
    \subfigcapskip=-6pt
    \subfigure[Varying $m_a=m_b$ from 100 to 6400.]{\includegraphics[width=0.47\textwidth]{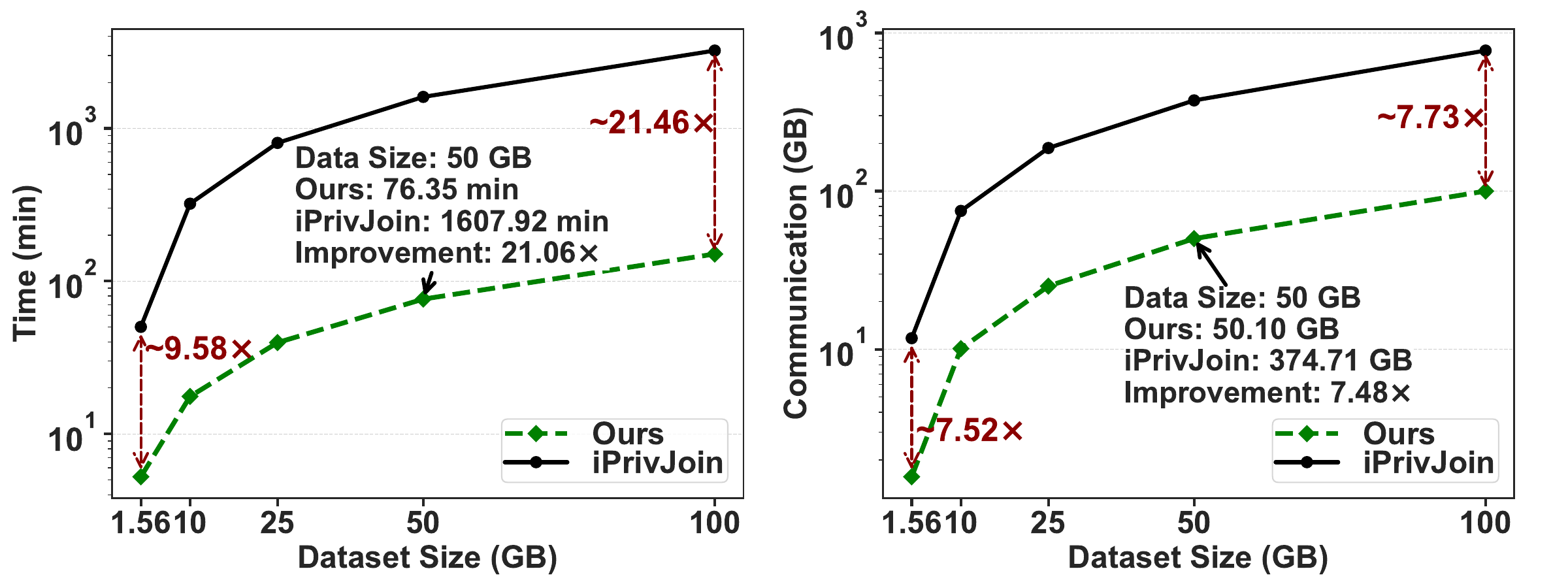}
    \label{fig:e2em1}
    }
    \subfigure[Varying $m_a$ (or $m_b$) from 100 to 6400 when fixing $m_b$ (or $m_a$).
    iPrivJoin\_${\rm m_a}$ and iPrivJoin\_${\rm m_b}$ denote the performance of \baseline when varying $m_a$ and $m_b$, respectively; Ours\_${\rm m_a/m_b}$ denotes the performance of \fname under either variation and is plotted as a single curve because $m_a$ and $m_b$ affect \fname identically.]{
    \includegraphics[width=0.48\textwidth]{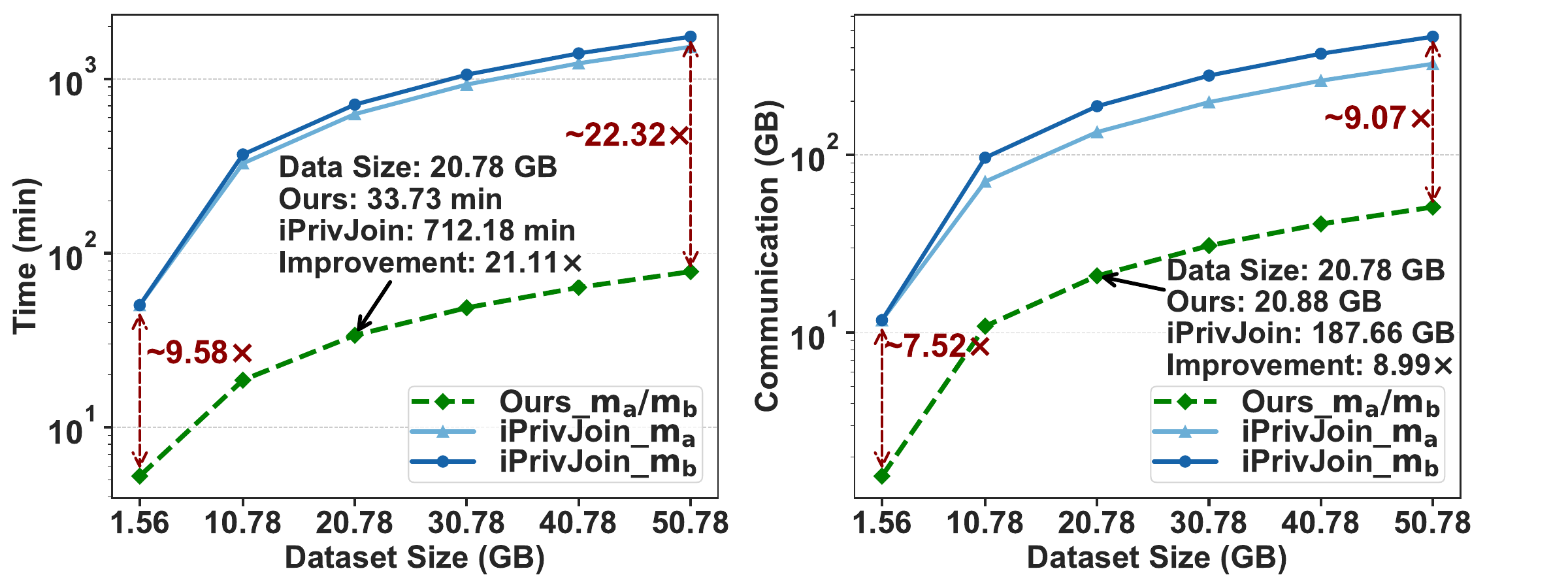}
    \label{fig:e2em2}
    }
\caption{Comparison of \fname vs. \baseline~\cite{liu2023iprivjoin} on varying feature dimensions when fixing row count.}
\label{fig:e2em}
\end{figure}

\subsubsection{Unbalanced Setting Experiments.}
One special scenario is the unbalanced setting~\cite{hao2024unbalanced,songsuda}, where the row count of the two parties' datasets differs by at least two orders of magnitude.
We conduct experiments with the SOTA secure unbalanced and redundancy-free data join protocol \suda~\cite{songsuda}.
The results demonstrate that \fname achieves $4.78 \sim 5.04\times$ faster running time compared to \suda in unbalanced settings
(see Appendix \ref{app:e_unbalance}).
\section{Discussion}
\label{sec:discussion}

\noindent \rall{\textbf{Supporting Join on Non-Key Attributes.}
For non-key attributes, there are two cases. In the first case, the attribute is unique within each table. The parties can treat this attribute as the identifiers and directly apply \fname to perform secure joins. In the second case, the attribute is non-unique, which means multiple records may share the same attribute value.
Handling non-unique attributes requires additional steps beyond the scope of this work, and we leave this as future work.
}

\noindent \rthree{
\textbf{Strengthening the Threat Model.}
\fname,  which is secure under the semi-honest model, can be strengthened to provide security under a stronger threat model that accounts for specific malicious protocol deviations.
We outline the strengthening path of \fname (comprising $\Pi_{\rm SMIG}$ and $\Pi_{\rm MI\text{-}SFA}$) toward this stronger threat model as follows: (i) For $\Pi_{\rm SMIG}$ (Protocol~\ref{pro:smig}), we can integrate a commitment scheme and Non-Interactive Zero-Knowledge proofs (NIZKs) to ensure that ECC-based ID encrypting is executed correctly, preventing malicious parties from using invalid keys; (ii) For $\Pi_{\rm MI\text{-}SFA}$ (Protocol~\ref{pro:sfa}), we can replace its underlying semi-honest oblivious shuffle protocol ($\Pi_{\rm O\text{-}Shuffle}$) with a malicious-secure shuffle protocol~\cite{song2023secret} to ensure that each party correctly shuffles the other party’s features. See Appendix~\ref{app:malicious} for a detailed discussion.
}

\section{Related work}
\label{sec:relatedwork}

\noindent\textbf{Circuit-based PSI.}
The initial \CPSI protocol is introduced by Huang \textit{et al.}~\cite{huang2012private}, which allows two parties, each with input $X$ and $Y$, to output the secret-shared $X\cap Y$ without revealing any other information.
Since the outputs are secret-shared, they can be directly used in subsequent MPC protocols, such as secure model training. In this way, \CPSI can be employed to perform secure data join.
Subsequent research on \CPSI primarily follows two directions: those based on OPPRF~\cite{kolesnikov2017practical,pinkas2019efficient,rindal2021vole,garimella2021oblivious,chandran2021circuit,rindal2022blazing} and others based on private set membership~\cite{ciampi2018combining,ma2022secure}.
A number of works~\cite{son2023psi,hao2024unbalanced,mahdavi2024pepsi} also focus on designing efficient \CPSI protocols for unbalanced settings.
Despite these innovations, \CPSI inherently introduces many redundant dummy rows (i.e., secret-shared zeros) in the joined table (padded to the maximum length), other than the actual matched rows. These redundant rows result in significant overhead to downstream secure data analytics tasks in many practical scenarios.

\par\vspace{0em}
\noindent \textbf{Balanced Secure Data Join Scheme.}
To address the above limitation in \CPSI,
Liu \textit{et al.} propose \baseline~\cite{liu2023iprivjoin}. To remove the redundant rows, \baseline introduces excessive communication overhead in the secure data join process, which stems from its reliance on OPPRF~\cite{rindal2021vole} and multiple rounds of oblivious shuffle.

\par\vspace{0em}
\noindent \textbf{Unbalanced Secure Data Join Scheme.}
Real-world scenarios may involve unbalanced datasets~\cite{songsuda}, where one party's dataset row count is significantly smaller than the other's.
Song \textit{et al.} propose the SOTA secure \textit{unbalanced} data join framework \suda~\cite{songsuda}. They leverage polynomial-based operations to efficiently output (the secret shares of) the redundancy-free joined table.
However, it scales inefficiently when directly applied to balanced settings, which are common and foundational settings.

\par\vspace{0em}
\noindent \textbf{Others.}
(1) Secure data join schemes~\cite{schnell2011novel,hardy2017private,24ElectronicsPPDA} for vertical federated learning rely on a Trusted Third Party (TTP), which may impose a strong trust assumption in practical scenarios.
(2) Private Set Union (PSU) scheme~\cite{sun2021vertical,yan2025idpriu} enables secure data join without a TTP but pads non-overlapping parts with synthetic data, resulting in redundant outputs.
(3) Enclave-based scheme~\cite{opauqe,efficienojoin,ObliDB} executes joins inside trusted hardware execution environments, which can reduce cryptographic overheads but inherit enclave trust.

\noindent We summarize representative secure data join schemes in Table~\ref{tab:techcmp}.

\begin{table}[htbp]
\caption{High-level comparison of various secure data join schemes with simplified communication complexity.
\rf{Here, $n$ is the row count of the input data; $m_a$ and $m_b$ are the feature dimensions for $P_a$ and $P_b$, respectively; $m = m_a + m_b$; $\ell$ and $\ell'$ are the bit-lengths of the element and the encrypted element, respectively; $\lambda$ and $\kappa$ are the statistical security parameter and the computational security parameter, respectively.}
}
\scalebox{0.82}{
\setlength{\tabcolsep}{1pt}
\begin{tabular}{ccc}
\toprule
Scheme       & Communication complexity    & Limitation                                   \\ \toprule
Ours         & $\mathcal{O}(nml)$                      & -                                            \\
\baseline~\cite{liu2023iprivjoin}    & $\mathcal{O}(nm_b\kappa + nm\ell)$ & High communication overhead                   \\
\CPSI~\cite{rindal2022blazing}          & $\mathcal{O}(nm_b\kappa + nm\ell)$ & Redundant rows                               \\
\suda~\cite{songsuda}        & $\mathcal{O}(nm\ell') $                     & Designed for unbalanced setting \\
FL-Join~\cite{hardy2017private}          & $\mathcal{O}(nm\ell')$                     & Relies on a trusted third party              \\
PSU~\cite{yan2025idpriu}          & $\mathcal{O}((n+n\log n)m\ell)$             & Redundant synthetic rows                     \\
Enclave-Join~\cite{efficienojoin} & $-$                           & Relies on enclave trust
\\
\bottomrule
\end{tabular}
}
\label{tab:techcmp}
\end{table}

\section{Conclusion}
In this paper, we propose an efficient and simple secure two-party data join protocol, referred to as \fname.
\fname outputs (the secret shares of) the redundancy-free joined table. The highlight of \fname lies in its simplicity: \fname builds upon two conceptually simple building blocks, an ECDH-PSI protocol and an oblivious shuffle protocol. The lightweight protocol design allows \fname to avoid many performance bottlenecks in the SOTA redundancy-free secure two-party data join protocol, \baseline, including the need for Cuckoo hashing and OPPRF. We also propose an optimization named \textit{dual mapping} that reduces the rounds of oblivious shuffle needed from two to one.
Compared to the SOTA redundancy-free protocol \baseline~\cite{liu2023iprivjoin}, \fname achieves $4.41 \sim 12.96\times$ faster running time and reduces communication size by $64.32\% \sim 80.07\%$.
Moreover, experiments on two-step SDA (secure join and secure analytics) show that \texttt{Bifrost}’s redundancy-free design avoids error blowup caused by redundant padded outputs (unlike \texttt{CPSI}), while delivering up to $2.80\times$ faster secure analytics with up to $73.15\%$ communication reduction compared to \CPSI.

\begin{acks}
 This work was supported by the National Cryptologic Science Fund of China (Number 2025NCSF01010) and the Natural Science Foundation of China (Number 92370120).

\end{acks}

\clearpage

\bibliographystyle{ACM-Reference-Format}
\bibliography{citations}

\section*{Appendix}

\renewcommand{\thesubsection}{\Alph{subsection}}

\subsection{Supporting Different Data Row Counts}
\label{app:difrowcounts}
In a real-world scenario, the two parties typically hold input data with different row counts.
We introduce a simple method to adapt our proposed \fname to support varying data count. Let $n_a$ and $n_b$ represent the input data row counts of parties $P_a$ and $P_b$, respectively.
The adaptation method is straightforward: both parties need to generate random permutations and a random matrix that corresponds to the sizes of their respective input data.
Specifically, in offline phase of $\Pi_{\rm \SMIG}$ (Protocol~\ref{pro:smig}), $P_a$ samples one random permutation $\pi^a_1 :[n_a] \mapsto [n_a]$ and another random permutation $\pi^a_2 :[n_b] \mapsto [n_b]$. Meanwhile, $P_a$ samples one random permutation $\pi^b_1 :[n_b] \mapsto [n_b]$ and another random permutation $\pi^b_2 :[n_a] \mapsto [n_a]$.
In addition, in the offline phase of $\Pi_{\rm MI\text{-}SFA}$ (Protocol~\ref{pro:sfa}), $P_a$ samples random matrix $R^a \in \mathbb{Z}^{n_a \times m_a}_{2^\ell}$, while $P_b$ samples random matrix $R^b \in \mathbb{Z}^{n_b \times m_b}_{2^\ell}$.

\subsection{Security Proof for Theorem~\ref{theorem:c}}
\label{app:proof}

\setcounter{theorem}{0}
\begin{theorem}
\label{theorem:c}
The functionality $\mathcal{F}_{\textit{\SMIG}}$ in Figure~\ref{funct:smig} reveals no information beyond what is revealed by the joined table row count $c$ to both parties.

\begin{proof}[Proof Sketch.]
Let $\mathcal{A}$ be a PPT adversary corrupting either $P_a$ or $P_b$, and let $c = |\mathsf{MIPairs}|$ denote the size of the intersection (i.e., the row count of the joined table). The simulator $\mathsf{Sim}$ is given only $c$. $\mathsf{Sim}$ samples two lists of independent, uniformly random values $I^1$ and $I^2$, each of size $c$, and constructs the set of pairs $\mathsf{MIPairs}^{\mathcal{S}} = [(I^1_i, I^2_i)]_{i \in [c]}$. $\mathsf{Sim}$ then outputs $\mathsf{MIPairs}^{\mathcal{S}}$ as the simulated view of adversary $\mathcal{A}$.
By the uniform randomness of the unknown permutations $\pi_1$ and $\pi_2$ from each party, the distribution of each mapped index in $\mathsf{MIPairs}$ is uniform and independent.
Therefore, the distribution of $\mathsf{MIPairs}^{\mathcal{S}}$ is identical to that of $\mathsf{MIPairs}$ observed by $\mathcal{A}$ in the execution of $\mathcal{F}_{\textit{SMIG}}$. Hence, for any adversary $\mathcal{A}$, the simulated view is indistinguishable from the real one.
\end{proof}

\end{theorem}

\subsection{Communication Complexity Comparison}
\label{app:comm_cmp}
\noindent \textbf{Online communication analyze of \fname.}
The online communication cost of \fname involves 4 rounds and $3n\sigma + 2c\lceil\log_2 n\rceil + nm\ell$ bits, where $\sigma$ is the bit-length of the ECC private key.
The online communication of \fname consists of two parts: SMIG protocol $\Pi_{\rm \SMIG}$ (Protocol~\ref{pro:smig}) and MI-SFA protocol $\Pi_{\rm MI\text{-}SFA}$ (Protocol~\ref{pro:sfa}).
\begin{itemize}[leftmargin=*,itemsep=0pt, topsep=0pt, partopsep=0pt, parsep=0pt]
    \item The SMIG protocol has a communication complexity of 3 rounds and a size of $3n\sigma + 2c\lceil\log_2 n\rceil$ bits. The process begins with $P_a$ sending $n$ encrypted elements to $P_b$, resulting in $n\sigma$ bits. In the second round, $P_b$ sends two lists of $n$ encrypted elements back to $P_a$, resulting in $2n\sigma$ bits. In the final round, $P_a$ sends the mapped intersection index pairs to $P_b$, resulting in $2c\lceil\log_2 n\rceil$ bits, where $c$ is the row count of the joined table.
    \item  The MI-SFA protocol has a communication complexity of 1 round and a size of $n(m_a+m_b)\ell$, assuming parallel execution of its send and receiving operations. The communication cost is entirely attributed to two concurrent invocations of the $\Pi_{\rm O\text{-}Shuffle}$ sub-protocol. In this round, party $P_a$ sends its $n \times m_a$ masked feature matrix, a total of $nm_a\ell$ bits, while $P_b$ sends its $n \times m_b$ masked feature matrix, a total of $nm_b\ell$ bits.
\end{itemize}
The SMIG protocol cost is independent of the feature dimensions $m$ and is much smaller than that of the MI-SFA protocol. Consequently, the overall communication of \fname is dominated by the MI-SFA protocol and scales roughly with the dataset size $n\cdot m$.

\par\vspace{0.5em}
\noindent \textbf{Offline communication analyze of \fname.}
\fname requires the shuffled random matrix shares generated by $\Pi_{\rm O-Shuffle}$ (Algorithm 2 in \baseline\cite{liu2023iprivjoin}) to mask the feature data of $P_a$ and $P_b$, respectively. Therefore, the communication size of our offline phase is \(\mathcal{O}((\kappa + \lambda)n\log nm)\) bits.

\par\vspace{0.5em}
\noindent \textbf{Online communication analyze of \baseline~\cite{liu2023iprivjoin}.} \baseline consists of three steps: private data encoding, oblivious shuffle, and private data trimming.
In the first step, both parties invoke the OPPRF on the IDs and features to share $P_b$'s features. $P_a$ secret shares its feature data.
And two parties invoke an OPRF once to encode ID data.
Thus, both parties obtain an encoded ID vector and a secret-shared dataset containing aligned features and redundant samples.
The communication size for the first step is $\mathcal{O}(hn(\lambda + \log n + m_b\kappa) + nm_a\ell)$ bits, and the number of rounds is 8. In the second step, both parties send and receive twice the masked datasets, which results in a communication size of $\mathcal{O}(nm\ell + n\kappa)$ and 2 communication rounds. In the final step, the encoded IDs are reconstructed, resulting in the communication size of $\mathcal{O}(n\kappa)$, and the number of rounds is 1. Thus, the total communication complexity and the number of rounds for \baseline are $\mathcal{O}(hn(\lambda + \log n + m_b\kappa) + nm\ell + n\kappa)$ bits and 11, respectively.

\par\vspace{0.5em}
\noindent \textbf{Offline communication analyze of \baseline~\cite{liu2023iprivjoin}.} \baseline requires the shuffled random matrix shares generated by two times of $\Pi_{\rm O-Shuffle}$ (Algorithm 2 in \baseline \cite{liu2023iprivjoin}) to mask both feature data and ID data from both parties. As a result, the communication size in the offline phase is \(\mathcal{O}((\omega\kappa + \lambda)n\log(\omega n) m)\)bits.

\subsection{Two Protocols for Secure Statistical Analysis}
\label{app:sec_statisical}
The full protocols we design for the secure Chi-square test (i.e., $\mathcal{X}^2$-test) computation and the secure Pearson correlation computation are shown in Protocol~\ref{pro:x2test} and Protocol~\ref{pro:pearson}, respectively. Below, we introduce the two protocols.

\subsubsection{Secure Chi-square Test}
\begin{small}
\begin{protocol}{$\Pi_{\rm \mathcal{X}^2\text{-}Test}$}{x2test}
\smallskip
\textbf{Parameters:}
The categories of the first feature $\mathit{CatX}$;
The categories of the second feature $\mathit{CatY}$;
The row count of input data $n$;
\smallskip
\\
\textbf{Inputs:} Two columns of the secret-shared aligned features $\share{X}$ and $\share{Y}$.
\\
\textbf{Outputs:} A secret-shared $\mathcal{X}^2\text{-}test$ value $\share{v}$, where $v = \mathcal{X}^2\text{-}test(X,Y)$

\begin{algorithmic}[1]

    \FOR{$j = 1$ \TO $|CatX|$ in parallel}
        \FOR{$k = 1$ \TO $|\mathit{CatY}|$ in parallel}
            \STATE $\share{\mathit{CatXY}_{i,j+k*|\mathit{CatX}|}} = (\share{X_i\|Y_i} == \mathit{CatX}_j\|\mathit{CatY}_k)$.
        \ENDFOR
    \ENDFOR

    \FOR{$i = 1$ \TO $n$ in parallel}
        \FOR{$j = 1$ \TO $|\mathit{CatX}|$ in parallel}
            \FOR{$k = 1$ \TO $|\mathit{CatY}|$ in parallel}
                \STATE $\share{\mathit{ContingencyT}_{j,k}} = \share{\mathit{ContingencyT}_{j,k}} + \share{\mathit{CatXY}_{i,j+k*|\mathit{CatX}|}}$.
            \ENDFOR
        \ENDFOR
    \ENDFOR

    \FOR{$i = 1$ \TO $|\mathit{CatX}|$ in parallel}
        \FOR{$j = 1$ \TO $|\mathit{CatY}|$ in parallel}
            \STATE $\share{\mathit{ColSum}_i} = \share{\mathit{ColSum}_i} + \share{\mathit{ContingencyT}_{i,j}}$.
            \STATE  $\share{\mathit{RowSum}_j} = \share{\mathit{RowSum}_j} + \share{\mathit{ContingencyT}_{i,j}}$.
        \ENDFOR
    \ENDFOR

    \STATE $\share{\mathit{RowSum}}= \share{\mathit{RowSum}}/n$.

    \STATE $\share{E} = \mathit{MatMul}(\share{\mathit{ColSum}_i}, \share{\mathit{RowSum}}^T)$.

    \STATE $\share{v} = \mathit{Sum}\left((\share{\mathit{ContingencyT}} - \share{E})^2/\share{E}\right)$
\end{algorithmic}

\end{protocol}
\end{small}
As shown in Protocol~\ref{pro:x2test}, the secure $\mathcal{X}^2$-test computation protocol $\Pi_{\rm \mathcal{X}^2\text{-}test}$ inputs two columns of secret-shared aligned features $\share{X}$ and $\share{Y}$, and outputs a secret-shared $\mathcal{X}^2$-test vale $\share{v}$, where $v = \mathcal{X}^2\text{-}test(X,Y)$.
In addition, $\mathit{CatX}$ are the categories of the feature $X$ and $\mathit{CatY}$ are the categories of the feature $Y$.

The protocol begins by having the parties jointly construct a secret-shared contingency table $\share{\mathit{ContingencyT}}$, where each entry $\share{\mathit{ContingencyT}_{j,k}}$ contains the count of records for which the features correspond to the category pair $(\mathit{CatX}_j\|\mathit{CatY}_k)$ (Lines 1-12). Next, the parties compute the secret-shared marginal totals for rows ($\share{\mathit{ColSum}}$) and columns ($\share{\mathit{RowSum}}$) from the contingency table (Lines 13-18). Using these marginals, they then calculate the secret-shared expected frequency table $\share{E}$ under the null hypothesis of independence (Lines 19-20). Using these secret-shared marginals, they then calculate the secret-shared expected frequency table $\share{E}$ under the null hypothesis of independence (Lines 19-20). To mitigate numerical growth, our protocol introduces an optimization: instead of computing the matrix product of the marginals and then dividing by the total count n, we first divide the $\share{\mathit{RowSum}}$ vector by $n$ (Line 19) before the matrix multiplication (Line 20). This ordering is crucial for computations over finite rings, as it reduces the magnitude of the intermediate values, thereby preventing potential overflows and enabling the use of a smaller ring, which in turn minimizes computational and communication overhead.
Finally, the protocol computes the $\mathcal{X}^2$-test share $\share{v}$ by securely summing the element-wise squared differences between the observed frequencies in $\share{\mathit{ContingencyT}}$ and the expected frequencies in $\share{E}$, normalized by the expected frequencies (Line 21).

\subsubsection{Secure Pearson Correlation}
\begin{small}
\begin{protocol}{$\Pi_{\rm Pearson}$}{pearson}
\smallskip
\textbf{Parameters:}
The row count of input data $n$;
The bit length of element $\ell$.
\smallskip
\\
\textbf{Inputs:} Two columns of the secret-shared aligned features $\share{X}$ and $\share{Y}$.
\\
\textbf{Outputs:} A secret-shared Pearson correlation value $\share{v}$, where $v = Pearson(X,Y)$.
\smallskip
\\
\textbf{Online:}
\begin{algorithmic}[1]
\STATE $\share{\mathit{X\_mean}} = Sum(\share{X})/n$.
\STATE $\share{\mathit{Y\_mean}} = Sum(\share{Y})/n$.

\FOR{$i = 1$ \TO $n$ in parallel}
    \STATE $\share{\mathit{X\_dif}_i} = \share{X_i} - \share{\mathit{X\_mean}}$.
    \STATE $\share{\mathit{Y\_dif}_i} = \share{Y_i} - \share{\mathit{Y\_mean}}$.
\ENDFOR

\STATE $\share{covariance} = \mathit{DotProduct}(\share{\mathit{Y\_dif}}, \share{\mathit{X\_dif}})$.
\STATE $\share{X\_dev} = \mathit{DotProduct}(\share{\mathit{X\_dif}}, \share{\mathit{X\_dif}})$.
\STATE $\share{Y\_dev} = \mathit{DotProduct}(\share{\mathit{Y\_dif}}, \share{\mathit{Y\_dif}})$.

\STATE $\share{X\_dev\_inv} = \mathit{InvertSqrt}(\share{X\_dev})$.
\STATE $\share{Y\_dev\_inv} = \mathit{InvertSqrt}(\share{Y\_dev})$.

\STATE $\share{v}= \share{covariance} * \share{X\_dev\_inv} * \share{Y\_dev\_inv}$.
\end{algorithmic}
\end{protocol}
\end{small}
As shown in Protocol~\ref{pro:pearson}, the secure Pearson correlation computation protocol $\Pi_{\rm Pearson}$ inputs two columns of secret-shared aligned features $\share{X}$ and $\share{Y}$, and outputs a secret-shared pearson correlation value $\share{v}$, where $v = Pearson(X,Y)$.

\begin{table*}[htbp]
\caption{Comparison of \fname vs. \baseline~\cite{liu2023iprivjoin} across three phases (offline, setup, and online) on eight real-world datasets. Results from the Online phase are highlighted: blue indicates better performance, while gray indicates worse performance. ``Education'' refers to the Education Career Success dataset.
}
\belowrulesep=0pt
\aboverulesep=0pt
\scalebox{0.88}{
\begin{tabular}{ccccccccccc} \toprule
\multicolumn{3}{c}{Dataset} & Education & Breast Cancer Gene & Skin Cancer & a9a & ARCENE & MNIST & Tiny ImageNet & CelebA \\
\toprule

\multirow{8}{*}{Time (s)}
& \multirow{2}{*}{Setup} & {\fname} & 0.90 & 0.36 & 1.93 & 5.91 & 0.27 & 11.55 & 24.52 & 52.82 \\
\cline{3-11}
& & {\cite{liu2023iprivjoin}} & 0.02 & 0.01 & 0.03 & 0.20 & 0.01 & 0.17 & 0.38 & 0.97 \\
\cmidrule[0.7pt]{2-11}

& \multirow{3}{*}{Offline} & \multirow{2}{*}{{\fname}} & \textbf{28.35} & \textbf{923.67} & \textbf{1428.97} & \textbf{358.01} & \textbf{11571.50} & \textbf{3257.21} & \textbf{23697.00} & \textbf{52772.25} \\
& & & (2.15$\times$) & (2.05$\times$) & (2.11$\times$)& (2.27$\times$) & (2.02$\times$) & (2.32$\times$) & (0.23$\times$) & (2.44$\times$) \\
\cline{3-11}
& & {\cite{liu2023iprivjoin}} & 61.01 & 1894.27 & 3017.93 & 811.16 & 23396.34 & 7557.74 & 5559.89 & 128964.29 \\
\cmidrule[0.7pt]{2-11}

& \multirow{3}{*}{\textbf{Online}}
& \multirow{2}{*}{{\fname}}
& \cellcolor[HTML]{DAE8FC}\textbf{1.17} & \cellcolor[HTML]{DAE8FC}\textbf{1.46} & \cellcolor[HTML]{DAE8FC}\textbf{7.11} & \cellcolor[HTML]{DAE8FC}\textbf{8.24} & \cellcolor[HTML]{DAE8FC}\textbf{6.28} & \cellcolor[HTML]{DAE8FC}\textbf{41.15} & \cellcolor[HTML]{DAE8FC}\textbf{287.37} & \cellcolor[HTML]{DAE8FC}\textbf{694.13} \\
& & & \cellcolor[HTML]{DAE8FC}(2.54$\times$) & \cellcolor[HTML]{DAE8FC}(13.28$\times$) & \cellcolor[HTML]{DAE8FC}(15.20$\times$) & \cellcolor[HTML]{DAE8FC}(7.51$\times$) & \cellcolor[HTML]{DAE8FC}(19.54$\times$) & \cellcolor[HTML]{DAE8FC}(15.55$\times$) & \cellcolor[HTML]{DAE8FC}(19.34$\times$) & \cellcolor[HTML]{DAE8FC}(19.64$\times$) \\
\cline{3-11}
& & {\cite{liu2023iprivjoin}} & \cellcolor[HTML]{EFEFEF}2.97 & \cellcolor[HTML]{EFEFEF}19.37 & \cellcolor[HTML]{EFEFEF}108.07 & \cellcolor[HTML]{EFEFEF}61.88 & \cellcolor[HTML]{EFEFEF}122.78 & \cellcolor[HTML]{EFEFEF}639.93 & \cellcolor[HTML]{EFEFEF}5557.13 & \cellcolor[HTML]{EFEFEF}13631.98 \\

\toprule
\multirow{6}{*}{Comm (MB)}
& \multirow{3}{*}{Offline} & \multirow{2}{*}{{\fname}} & \textbf{40.85} & \textbf{690.57} & \textbf{3427.70} & \textbf{1810.01} & \textbf{5690.00} & \textbf{21968.86} & \textbf{189022.21} & \textbf{483810.00} \\
& & & (2.54$\times$) & (2.45$\times$) & (2.46$\times$) & (2.53$\times$) & (2.43$\times$) & (2.53$\times$) & (2.53$\times$) & (2.54$\times$) \\
\cline{3-11}
& & {\cite{liu2023iprivjoin}} & 103.89 & 1689.20 & 8438.50 & 4574.48 & 13831.38 & 55595.27 & 478611.54 & 1226790.31 \\
\cmidrule[0.7pt]{2-11}

& \multirow{3}{*}{\textbf{Online}}
& \multirow{2}{*}{{\fname}}
& \cellcolor[HTML]{DAE8FC}\textbf{1.21} & \cellcolor[HTML]{DAE8FC}\textbf{10.25} & \cellcolor[HTML]{DAE8FC}\textbf{60.87} & \cellcolor[HTML]{DAE8FC}\textbf{33.98} & \cellcolor[HTML]{DAE8FC}\textbf{68.76} & \cellcolor[HTML]{DAE8FC}\textbf{365.20} & \cellcolor[HTML]{DAE8FC}\textbf{3135.53} & \cellcolor[HTML]{DAE8FC}\textbf{7650.45} \\
& & & \cellcolor[HTML]{DAE8FC}(6.56$\times$) & \cellcolor[HTML]{DAE8FC}(6.58$\times$) & \cellcolor[HTML]{DAE8FC}(6.33$\times$) & \cellcolor[HTML]{DAE8FC}(7.29$\times$) & \cellcolor[HTML]{DAE8FC}(6.38$\times$) & \cellcolor[HTML]{DAE8FC}(6.31$\times$) & \cellcolor[HTML]{DAE8FC}(6.34$\times$) & \cellcolor[HTML]{DAE8FC}(6.34$\times$) \\
\cline{3-11}
& & {\cite{liu2023iprivjoin}} & \cellcolor[HTML]{EFEFEF}7.96 & \cellcolor[HTML]{EFEFEF}67.44 & \cellcolor[HTML]{EFEFEF}385.51 & \cellcolor[HTML]{EFEFEF}247.85 & \cellcolor[HTML]{EFEFEF}438.45 & \cellcolor[HTML]{EFEFEF}2304.43 & \cellcolor[HTML]{EFEFEF}19885.10 & \cellcolor[HTML]{EFEFEF}48527.58 \\

\toprule
\multicolumn{2}{c}{\multirow{3}{*}{Memory Cost (MB)}} & \multirow{2}{*}{{\fname}} & \textbf{19.82} & \textbf{83.23} & \textbf{437.18} & \textbf{181.54} & \textbf{493.84} & \textbf{2555.14} & \textbf{21937.85} & \textbf{53518.57} \\
\multicolumn{2}{c}{} & & (1.40$\times$) & (2.61$\times$) & (2.61$\times$) & (3.75$\times$) & (2.63$\times$) & (2.67$\times$) & (2.68$\times$) & (2.56$\times$) \\
\cline{3-11}
\multicolumn{2}{c}{} & {\cite{liu2023iprivjoin}} & 27.83 & 217.25 & 1142.05 & 680.80 & 1299.34 & 6830.83 & 58822.61 & 136947.61 \\

\bottomrule
\end{tabular}
}
\label{tab:reald}
\end{table*}
Initially, the parties collaboratively compute the secret-shared means, $\share{\mathit{X\_mean}}$ and $\share{\mathit{Y\_mean}}$, for each vector (Lines 1-2). Subsequently, they calculate the deviation of each element from its respective mean, producing the secret-shared difference vectors $\share{\mathit{X\_dif}}$ and $\share{\mathit{Y\_dif}}$ (Lines 3-6). The protocol then proceeds to compute the core components of the Pearson formula: the secret-shared covariance, which is the dot product of the two difference vectors (Line 7), and the sum of squared deviations for each vector, $\share{X\_dev}$ and $\share{Y\_dev}$ (Lines 8-9).
Finally, a crucial optimization in our proposed protocol is avoiding a high-overhead secure division. The standard Pearson formula involves a secure division by the product of standard deviations.
Secure division in secret sharing uses Newton-Raphson's method for finding a modular inverse, which requires $\mathcal{O}(log(k))$ rounds of multiplication where $k$ is the required precision. Instead, our protocol first computes the secret-shared inverse square roots of $\share{X\_dev}$ and $\share{Y\_dev}$ (Lines 10-11). This allows the final Pearson correlation coefficient $\share{v}$ to be efficiently computed using only secure multiplications (Line 12), thereby reducing communication rounds and overall overhead.

\subsection{Additional Evaluation}
\label{app:e}

\subsubsection{Comparison across Real-World Datasets}
\label{app:e_d}
As shown in Table~\ref{tab:reald}, we compare \fname with \baseline~\cite{liu2023iprivjoin} in three phases using eight real-world datasets. The experimental results have been analyzed in the main text and are not further discussed here.

\subsubsection{Comparison on Small-Scale Datasets under Varying Parameters}
\label{app:e_small}
We conduct evaluations for varying row count $n$ and varying feature dimension $m$ separately.

We conduct experiments on datasets of row count $n \in  \{2^{12},2^{16},$ $2^{20}\} $ with a fixed feature dimension (\(m_a = m_b = 100\)).
As shown in Table~\ref{tab:e2en}, \fname significantly outperforms the \baseline by \(4.17 \sim 10.10\times\) in running time and \(7.43 \sim 7.58\times\) in communication size.
When the dataset is small (e.g., $n = 2^{12}$), our improvements over the \baseline are relatively limited. This is primarily because, under such conditions, both the dataset size and feature dimensionality are low. In this regime, the ECC computation for IDs within our protocol becomes the dominant overhead, thereby creating an efficiency bottleneck.
Furthermore, the communication size reduction of \fname remains at about \(86.63\%\).

\begin{table}[htbp]
\belowrulesep=0pt
\aboverulesep=0pt
\centering
\caption{
Comparison of \fname vs. \baseline~\cite{liu2023iprivjoin} (feature dimensions \(m_a = m_b = 100\)) across dataset sizes $n$.
}
\setlength{\tabcolsep}{5pt}
\renewcommand{\arraystretch}{1.15}
\scalebox{0.88}{
\begin{tabular}{c|ccc|ccc}
\toprule
      \multirow{2}{*}{Protocol}    & \multicolumn{3}{c|}{Time (s)}                                                       & \multicolumn{3}{c}{Communication (MB)}                                             \\ \cline{2-7}
         & $2^{12}$ & $2^{16}$ & $2^{20}$ & $2^{12}$ & $2^{16}$ & $2^{20}$ \\ \toprule
{\small \fname}       &   3.20   & 19.29   & 306.12   & 6.65      & 106.40     & 1702.40     \\
{\small \baseline\cite{liu2023iprivjoin}}  & 13.37  & 194.30  & 3090.55  & 50.45     & 791.03     & 12653.15
\\ \bottomrule
\end{tabular}
}
\label{tab:e2en}
\end{table}

We vary the feature dimensions $m_a = m_b \in \{ 100, 500, 1000\}$ and fix the dataset row count $n = 2^{16}$.
As shown in Table \ref{tab:e2em}, \fname outperforms \baseline by $11.85\sim17.26\times$ in terms of running time and $7.43\sim 7.69\times$ in terms of communication size.
In addition, the performance improvement of \fname becomes more pronounced as the feature dimension $m$ increases.
For instance, as \(m_a = m_b\) increases from 100 to 1000, the running time improvement of \fname over \baseline increases from $11.85\times$ to $17.26\times$.

\begin{table}[htbp]
\belowrulesep=0pt
\aboverulesep=0pt
\centering
\caption{Comparison of \fname vs. \baseline \cite{liu2023iprivjoin} (dataset size \(n = 2^{16}\)) across varing feature dimension $m_a$ and $m_b$.
}
\setlength{\tabcolsep}{5pt}
\scalebox{0.88}{
\begin{tabular}{c|ccc|ccc}
\toprule
    \multirow{2}{*}{Protocol}        & \multicolumn{3}{c|}{Time (s) }     & \multicolumn{3}{c}{Communication (MB)} \\ \cline{2-7}
 & 100      & 500      & 1000       & 100        & 500         & 1000        \\ \toprule
{\small \fname}     & 19.34   & 53.24   & 96.26   & 106.40     & 506.40      & 1006.40    \\
{\small \baseline\cite{liu2023iprivjoin}} & 194.06  & 951.19  & 1895.38 & 791.03     & 3880.70     & 7742.78
\\ \bottomrule
\end{tabular}
}
\label{tab:e2em}
\end{table}

\subsubsection{Comparison in Offline Phase}
\label{app:e_off}
In the offline phase, both \fname and  \baseline~\cite{liu2023iprivjoin} only rely on $\mathcal{F}_{\textit{O-Shuffle}}$.
\baseline introduces a local extension method to expand the output of $\mathcal{F}_{\textit{O-Shuffle}}$ from a single dimension to $m$ dimensions. Although this local extension method improves offline efficiency, it introduces a 1-bit error. As a result, the extension creates an inaccurately joined table shares, which reduces the accuracy of subsequent SDA tasks. The experimental results on the Skin Cancer dataset in \baseline\cite{liu2023iprivjoin} show a decrease in accuracy due to this error, with accuracy dropping from $88.47\%$ to $88.11\%$, representing a decrease of approximately $0.41\%$.
Consequently, we use the oblivious shuffle protocol $\Pi_{\rm O\text{-}Shuffle}$ without the extension method to produce precise random values in the offline phase. To ensure a fair comparison, both our protocol and \baseline disable the local extension method and implement the same version of $\Pi_{\rm O\text{-}Shuffle}$ during evaluation.

\begin{table}[htbp]
\belowrulesep=0pt
\aboverulesep=0pt
\centering
\caption{Offline comparison of \fname vs. \baseline\cite{liu2023iprivjoin}.}
\setlength{\tabcolsep}{5pt}
\renewcommand{\arraystretch}{1.15}
\scalebox{0.88}{
\begin{tabular}{c|ccc|ccc}
\toprule
      \multirow{2}{*}{Protocol}    & \multicolumn{3}{c|}{Time (s)}                                                       & \multicolumn{3}{c}{Communication (MB)}                                             \\ \cline{2-7}
         & $2^{12}$ & $2^{16}$ & $2^{20}$ & $2^{12}$ & $2^{16}$ & $2^{20}$ \\ \toprule
{\small \fname}       &   27.62                     & 81.98                     & 1016.84                   & 36.31                     & 558.27                    & 10774.62                  \\
{\small \baseline\cite{liu2023iprivjoin}}  & 60.30                     & 200.45                    & 2663.15                   & 92.91                     & 1484.26                   & 28513.83
\\ \bottomrule
\end{tabular}
}
\label{tab:off}
\end{table}

We conduct offline experiments on datasets of varying row count \( n \in \{2^{12}, 2^{16}, 2^{20}\} \), with a fixed feature dimension (\( m_a = m_b = 10 \)). As shown in Table~\ref{tab:off}, \fname outperforms the \baseline by an average of $2.42\times$ in terms of time. This significant improvement is primarily due to \fname's enhanced communication complexity, which reduces the communication size by an average of $2.62\times$ compared to the \baseline.

\subsubsection{Comparison with unbalanced secure data join protocol~\cite{songsuda}.}
\label{app:e_unbalance}
One special scenario is the unbalanced setting~\cite{hao2024unbalanced,songsuda}, where the row count of the two parties' datasets differs by at least two orders of magnitude.
We conduct experiments with the SOTA secure unbalanced and redundancy-free data join protocol \suda~\cite{songsuda}.
Let $n_a$ and $n_b$ represent the input data row counts of parties $P_a$ and $P_b$, respectively.
We vary the feature dimensions $m_a = m_b \in \{ 100, 640\}$ and vary $P_a$'s row count $n_a \in \{1024, 2048\}$ when fixing $P_b$'s row count $n_b = 2^{20}$.
The results in Table~\ref{tab:unbalance} show that \fname achieves $2.24 \sim 5.04\times$ faster running time compared to \suda. However, this performance gain comes with an increase in communication size. This can be explained by the fact that \suda leverages homomorphic encryption to perform polynomial operations on encrypted ciphertexts. As a result, \suda incurs higher computational costs but benefits from reduced communication overhead.

\begin{table}[htbp]
\caption{Comparison of \fname vs. unbalanced protocol \suda~\cite{songsuda}.}
\belowrulesep=0pt
\aboverulesep=0pt
\scalebox{0.88}{
\setlength{\tabcolsep}{4pt}
\begin{tabular}{cc|cBcc|cc}
\toprule
\multirow{2}{*}{$n_b$} & \multirow{2}{*}{$n_a$}  & \multirow{2}{*}{Protocol} & \multicolumn{2}{c|}{$m_a = m_b = 100$} & \multicolumn{2}{c}{$m_a = m_b = 640$} \\ \cline{4-7}
                     &                       &                           & Time (s)        & Comm. (MB)        & Time (s)        & Comm. (MB)        \\ \toprule
\multirow{4}{*}{$2^{20}$}  & \multirow{2}{*}{1024} & {\small \fname}                      & 73.71           & 832.85            & 443.15          & 5157.07           \\
                     &                       & {\small \suda~\cite{songsuda}}                      & 171.32          & 154.42            & 993.57          & 508.01            \\ \cline{2-7}
                     & \multirow{2}{*}{2048} & {\small \fname}                      & 73.90           & 833.70            & 445.87          & 5162.14           \\
                     &                       &  {\small \suda~\cite{songsuda}}                      & 353.81          & 226.57            & 2245.70         & 933.71     \\
\bottomrule
\end{tabular}
}
\label{tab:unbalance}
\end{table}

\newcommand{\Prove}{\mathsf{Prove}}
\newcommand{\Verify}{\mathsf{Verify}}
\newcommand{\crs}{\mathsf{crs}}
\newcommand{\Com}{\mathsf{Com}}

\subsection{Strengthening the Threat Model}
\label{app:malicious}

\rthree{
In this section, we elaborate on the path to strengthen the threat model of \fname.
\fname comprises $\Pi_{\rm SMIG}$ (Protocol~\ref{pro:smig}) and $\Pi_{\rm MI\text{-}SFA}$ (Protocol~\ref{pro:sfa}), which are secure against semi-honest adversaries.
In contrast to semi-honest adversaries who follow all protocol specifications but passively attempt to learn additional information, malicious adversaries may try to deviate from the protocol to either obtain private information or compromise its functionality.
The main challenges in strengthening \fname to secure against malicious adversaries are twofold:
(i) ensuring that the ECC-based ID encrypting step in Protocol~\ref{pro:smig} is executed correctly by both parties, as this is the main interactive component;
(ii) ensuring that the oblivious shuffle protocol in Protocol~\ref{pro:sfa} is performed correctly, as this is the only interactive process in Protocol~\ref{pro:sfa}).}

\rthree{We can strengthen \fname against these malicious behaviors by (i) augmenting $\Pi_{\rm SMIG}$ (Protocol~\ref{pro:smig}) with a commitment scheme~\cite{pedersen1991non} and Non-Interactive Zero-Knowledge proofs (NIZKs)~\cite{fiat1986prove,groth2016size}, and (ii) replacing underlying oblivious shuffle protocol $\Pi_{\rm O\text{-}Shuffle}$ in $\Pi_{\rm MI\text{-}SFA}$ (Protocol~\ref{pro:sfa}) with malicious-secure oblivious shuffle protocol in Figure~12 of~\cite{song2023secret}. Since the latter is a modular, drop-in substitution, we focus below on augmenting $\Pi_{\rm SMIG}$ (Protocol~\ref{pro:smig}) using the commit-and-prove paradigm.}

\smallskip
\noindent\textit{\rthree{Zero-knowledge notation.}}
\rthree{Let $(\mathbb{G}, +)$ be a prime-order cyclic group of order $q$ generated by $g$.
Let $\Com(\cdot;\cdot)$ be a commitment scheme (e.g., instantiated with Pedersen commitments~\cite{pedersen1991non}) with standard binding and hiding properties.
Let $\Pi_{\mathsf{NIZK}}$ be a NIZK system.
For a relation $\mathcal{R}$, $x$ denotes the public statement, $w$ denotes the corresponding private witness such that $(x, w) \in \mathcal{R}$, and a proof $\pi^{zkp}$ allows a prover to convince a verifier of the validity of $x$ without leaking any information about $w$.}

\smallskip
\noindent \textit{\rthree{Commit phase.}}
\rthree{
First, we introduce a commitment phase before the Setup step in Protocol~\ref{pro:smig}. $P_a$ generates commitments $C_{\IDa} = \mathsf{Com}(\IDa; r_{\IDa})$, $C_{\pi^a_1} = \mathsf{Com}(\pi^a_1; r_{\pi^a_1}), C_{\pi^a_2} = \mathsf{Com}(\pi^a_2; r_{\pi^a_2})$, $C_{\alpha} = \mathsf{Com}(\alpha; r_{\alpha})$ (collectively abbreviated as $C_a =\Com(\IDa, \pi^a_1$, $\pi^a_2, \alpha;$ $ R_a)$) to its identifiers, its two permutations, and its private key, using randomness. Similarly, $P_b$ use randomness to generate commitments $C_{\IDb} = \mathsf{Com}(\IDb; r_{\IDb})$, $C_{\pi^b_1} = \mathsf{Com}(\pi^b_1; r_{\pi^b_1}), C_{\pi^b_2} = \mathsf{Com}(\pi^b_2; r_{\pi^b_2})$,$ C_{\beta} = \mathsf{Com}(\beta; r_{\beta})$ (collectively abbreviated as $C_b = \Com( $ $\IDb, \pi^b_1, \pi^b_2, \beta; R_b)$). Two parties exchange these commitments with communication complexity $\mathcal{O}(n)$ for $n$ identifiers. All subsequent NIZKs are proven relative to these commitments, and any verification failure causes an immediate protocol abort.}

\smallskip
\noindent \textit{\rthree{Adding prove-and-verify in three substeps $1.(1)$--$1.(3)$.}}
\rthree{
In Online Step 1-(1), when $P_a$ sends $\alpha H(\IDaa)$ to $P_b$, it appends a
proof $\pi^{zkp}_1$. This proof $\pi^{zkp}_1$ demonstrates knowledge of the values committed in $C_a =\Com(\IDa, \pi^a_1, \pi^a_2, \alpha; R_a)$ and that $\alpha H(\IDaa)$ is correctly computed such that $(\alpha H(\IDaa))_{\pi^a_1(i)} = \alpha H(\IDa_i)$ for all $i \in [n]$.
This proof ensures $P_a$ uses its committed identifiers $\IDa$, permutation $\pi^a_1$ and private key $\alpha$ to generate the shuffled and encrypted identifiers.
$P_b$ verifies $\pi^{zkp}_1$ using $C_a$ and the received $\alpha H(\IDaa)$, aborting if verification fails.}

\rthree{
In Step 1-(2), When $P_b$ sends $\beta\alpha H(\IDaab)$ and $\beta H(\IDbb)$, it appends a NIZK proof $\pi^{zkp}_2$. This proof demonstrates knowledge of $(\IDb, \pi^b_1, \pi^b_2, \beta; R_b)$ corresponding to $C_b$ such that: (1) $\beta H(\IDbb)$ is correctly computed with $(\beta H(\IDbb))_{\pi^b_2(j)} = \beta H(\IDb_j)$ for all $j \in [n]$ using the committed $\IDb$, $\pi^b_2$, and $\beta$; and (2) $\beta\alpha H(\IDaab)$ is correctly computed by applying the committed permutation $\pi^b_1$ and key $\beta$ to the received $\alpha H(\IDaa)$ such that $(\beta\alpha H(\IDaab))_{\pi^b_1(i)} = \beta(\alpha H(\IDaa))_i$ for all $i \in [n]$. $P_a$ verifies $\pi^{zkp}_2$ using $C_b$, the previously sent $\alpha H(\IDaa)$, and the received $\beta\alpha H(\IDaab)$ and $\beta H(\IDbb)$, aborting if verification fails.}

\rthree{
In Step 1-(3), after $P_a$ decrypts $\beta\alpha H(\IDaab)$ using its private key $\alpha$ to obtain $\beta H(\IDaab) = [\alpha^{-1}(x)]_{x \in \beta\alpha H(\IDaab)}$, it computes the mapped intersection index pairs $\MII$. When $P_a$ sends the output $\MII$ to $P_b$, it appends a NIZK proof $\pi^{zkp}_3$. This proof demonstrates that: (1) the decryption was performed correctly using the committed $\alpha$ from $C_a$ such that $(\beta H(\IDaab))_i = \alpha^{-1}((\beta\alpha H(\IDaab))_i)$ for all $i \in [n]$, and (2) $\MII$ is computed correctly from $\beta H(\IDaab)$ and $\beta H(\IDbb)$ relative to $P_a$'s commitment $C_a$, i.e., $\MII$ contains exactly the pairs $(i, \pi^a_2(j))$ for all $(i, j)$ where $(\beta H(\IDaab))_i = (\beta H(\IDbb))_j$, and no other pairs.
$P_b$ verifies $\pi^{zkp}_3$ using $C_a$, the previously sent $\beta\alpha H(\IDaab)$ and $\beta H(\IDbb)$, and the received $\MII$, accepting $\MII$ as the final output only if verification succeeds.
Across three steps 1.(1)–1.(3), all proofs incur $\mathcal{O}(n)$ communication overhead for $n$ identifiers.
}

\end{document}